\title{One-Dimensional Tunnel-Junction Formula 
for Schr\"{o}dinger Particle}
\author{Masao Hirokawa\thanks{Department of Mathematics, Okayama University, 
Okayama, 700-8530, Japan ({\tt hirokawa@math.okayama-u.ac.jp}).}
        \and Takuya Kosaka\thanks{Department of Mathematics, 
Okayama University, Okayama, 700-8530, Japan.}}
\begin{document}

\maketitle

\begin{abstract}
We handle all the self-adjoint extensions 
of the minimal Schr\"{o}dinger operator 
for the non-relativistic electron 
living in the one-dimensional configuration space 
with a junction. 
We are interested in every boundary condition 
corresponding to the individual self-adjoint extension. 
Thus, we clarify all the types of those boundary conditions 
of the wave functions of the non-relativistic electron. 
We find a tunnel-junction formula 
for the non-relativistic electron 
passing through the junction. 
Using this tunnel-junction formula, 
we propose a mathematical possibility 
of a tunnel-junction device for qubit. 
\end{abstract}

\begin{keywords} 
Schr\"{o}dinger operator, self-adjoint extension, 
tunnel-junction device, qubit
\end{keywords}

\begin{AMS}
47B25, 81P45, 81P68, 81Q10
\end{AMS}

\pagestyle{myheadings}
\thispagestyle{plain}
\markboth{M. HIROKAWA AND T. KOSAKA}{One-Dimensional Tunnel-Junction Formula}

\section{Introduction} 

In this paper we consider a single electron 
living in the one-dimensional configuration space with a junction. 
Thus, we consider it the Schr\"{o}dinger particle.  
We mathematically simplify our physical set-up in the following: 
We regard the one-dimensional configuration space with 
the junction as the space $\Omega_{\Lambda}:=
\mathbb{R}\setminus [-\Lambda, +\Lambda]$. 
The segment $[-\Lambda, +\Lambda]$ is the junction then. 
We have been interested in the boundary condition 
of the wave functions of the non-relativistic electron 
so that the electron's energy operator 
(i.e., the Schr\"{o}dinger operator) 
becomes an observable (i.e., self-adjoint).  
We have considered some self-adjoint extensions of 
the minimal Schr\"{o}dinger operator $H_{0}$ 
for this configuration space in \cite{FHNS10,SH11} 
from the point of the view of a quantum device. 
We then showed that there are some cases where 
the wave functions on which a self-adjoint extension acts 
have their own phase factor at the boundary, $\pm\Lambda$, 
when the wave functions pass through the junction. 
On the other hand, we showed that 
the wave functions do not have such a phase factor 
when they respectively stay in the left island 
$\Omega_{\Lambda,L}:=(-\infty, -\Lambda)$ 
and in the right island  
$\Omega_{\Lambda,R}:=(+\Lambda, +\infty)$, 
namely, when there is no exchange between the wave functions 
living individual islands. 
Actually, all the self-adjoint extensions 
of our minimal Schr\"{o}dinger operator $H_{0}$ 
can be parameterized by 
$(u_{jk})_{j,k=1,2}\in U(2)$ following the von Neumann's theory. 
Here, $U(2)$ is the unitary group of the degree $2$. 
In \cite{FHNS10} the appearance of the phase factor was shown 
only in the cases where diagonal entries of 
the unitary matrix are all zero 
(i.e., $u_{11}=u_{22}=0$). 
In this paper, we complete the result in \cite{FHNS10}. 
As asserted in Theorem \ref{theo:main2} below, 
we will characterize the boundary conditions of all 
the self-adjoint extensions of the minimal Schr\"{o}dinger 
operator $H_{0}$ by just two types of 
boundary conditions proposed in \cite{FHNS10}. 
More precisely, we will actually construct the boundary condition 
from every matrix of $U(2)$.  
We will then give a tunnel-junction formula concerning 
the phase factor for the non-relativistic electron 
as the Schr\"{o}dinger particle. 
We will propose a mathematical idea 
to make a qubit from a Schr\"{o}dinger particle 
through the tunnel-junction formula 
by controlling the phase factor, 
even though the spin of the Schr\"{o}dinger particle 
cannot be used. 
In the case where the electron spin should be considered, 
we studied similar problem for a single relativistic electron 
as the Dirac particle \cite{HK1}.   

\section{Main Theorem} 

First up, we prepare some mathematical tools 
and recall some results to state our main theorem, 
Theorem \ref{theo:main2}. 

We respectively define function spaces $AC^{2}(\Omega_{\Lambda})$ and 
$AC_{0}^{2}(\Omega_{\Lambda})$ for our configuration space 
$\Omega_{\Lambda}$ by 
\begin{align*}
AC^{2}\bigl(\overline{\Omega_{\Lambda}}\big) :=
&\Biggl\{ f \in L^{2}(\Omega_{\Lambda})\, \bigg| \, 
\text{$f$ and $f{\,}'$ are absolutely continuous on 
$\overline{\Omega_{\Lambda}}$,} \\ 
&\qquad\qquad\qquad\quad 
\text{and}\,\,\, f{\,}', f{\,}'' \in L^{2}(\Omega_{\Lambda}) 
\Biggr\},  
\end{align*}
and 
$$
AC_{0}^{2}\bigl(\overline{\Omega_{\Lambda}}\bigr):= 
\left\{ f \in AC^{2}\bigl(\overline{\Omega_{\Lambda}}\bigr)\, 
\bigg|\, 
f=f{\,}'=0\,\,\, \text{on}\,\,\, \partial \Omega_{\Lambda}
\right\}.
$$
Here $L^{2}(\Omega_{\Lambda})$ is the set of all 
the square Lebesgue-integrable functions on $\Omega_{\Lambda}$, 
and $\partial\Omega_{\Lambda}$ denotes the boundary 
$\{ -\infty, -\Lambda, +\Lambda, +\infty\}$ 
of $\Omega_{\Lambda}$. 
Thus, $f\in AC_{0}^{2}\bigl(\overline{\Omega_{\Lambda}}\bigr)$ 
satisfies the boundary condition: 
$\lim_{x\to a}f^{(n)}(x)=0$ for 
$a\in\partial\Omega_{\Lambda}$ 
and $n=0, 1$. 

We denote by $D(A)$ the domain of a linear operator $A$ 
throughout this paper. 
Now we introduce our starting object: 
\begin{definition}
{\rm (Minimal Schr\"{o}dinger Operator): 
The $1$-dimensional \textit{Schr\"{o}dinger operator} 
$H_{0}$ is defined by
$$
\begin{cases}
D(H_{0}):=AC_{0}^{2}(\overline{\Omega_{\Lambda}}), \\ 
H_{0}:=\, -\, 
{\displaystyle \frac{d^{2}}{dx^{2}}}.  
\end{cases}
$$
We call the operator $H_{0}$ the \textit{minimal Schr\"{o}dinger operator}.
}  
\end{definition}

Following \cite[Theorems 8.25(b) and 8.22]{weidmann} respectively, 
we realize that the minimal Schr\"{o}dinger operator 
$H_{0}$ is closed symmetric, and its adjoint operator $H_{0}^{*}$ 
is given as
$$
\begin{cases}
D(H_{0}^{*}) = AC^{2}(\overline{\Omega_{\Lambda}}), \\ 
H_{0}^{*}=\, -\, 
{\displaystyle \frac{d^{2}}{dx^{2}}}.   
\end{cases}
$$
Then, the operation of $H_{0}^{*}$ is the same as that of $H_{0}$ 
though their domains are different from each other. 
So, we call the adjoint operator $H_{0}^{*}$ 
the \textit{maximal Schr\"{o}dinger operator}, 
and moreover, we employ the following naming: 
\begin{definition}
For every subspace $\mathcal{D}$ with the condition, 
$D(H_{0})\subset\mathcal{D}\subset D(H_{0}^{*})$, 
we call the restriction, $H_{0}^{*}\lceil\mathcal{D}$, 
of $H_{0}^{*}$ on $\mathcal{D}$ 
the \textit{Schr\"{o}dinger operator}. 
\end{definition} 

We will investigate every boundary condition of 
the wave functions on which individual 
self-adjoint extension acts. 

It was proved in \cite{FHNS10} 
that both deficiency indices are $2$: 
$n_{+}(H_{0}) = n_{-}(H_{0}) = 2$, 
where the deficiency index $n_{\pm}(H_{0})$ is defined as 
the dimension of the individual deficiency subspace 
$\mathcal{K}_{\pm}(H_{0}):= \mathrm{ker}
(\pm i-H_{0}^{*})$: 
$n_{\pm}(H_{0}):=\mathrm{dim}\,\mathcal{K}_{\pm}(H_{0})$. 
Thus, the minimal Schr\"{o}dinger operator has uncountably many 
self-adjoint extensions. 
More precisely, following the general theory of differential equation 
and solving simple differential equations: 
$H_{0}^{*}\psi=\pm i\psi$, 
we can obtain orthonormal bases, 
$\left\{ L_{\pm} , R_{\pm}\right\}$, 
of the deficiency subspaces 
$\mathcal{K}_{\pm}(H_{0})$, respectively:
\begin{eqnarray}
&
L_{+}(x):= \left\{ 
\begin{array}{cl}
Ne^{(1-i)x/\sqrt{2}} & \textrm{if} -\infty<x<\Lambda, \\ 
0 & \textrm{if} \ \Lambda<x<\infty, \\ 
\end{array} \right. \\ 
&
L_{-}(x):= \left\{ 
\begin{array}{cl}
Ne^{(1+i)x/\sqrt{2}} & \textrm{if} \ -\infty<x<\Lambda, \\ 
0 & \textrm{if} \ \Lambda<x<\infty, \\ 
\end{array} \right.  
\end{eqnarray}
and 
\begin{eqnarray}
&
R_{+}(x):= \left\{ 
\begin{array}{cl}
0 & \textrm{if} \ -\infty< x < \Lambda, \\ 
Ne^{(-1+i)x/\sqrt{2}} & \textrm{if} \ \Lambda<x<\infty, \\ 
\end{array} \right. \\ 
&
R_{-}(x):= \left\{ 
\begin{array}{cl}
0 & \textrm{if} \ -\infty<x<\Lambda, \\ 
Ne^{(-1-i)x/\sqrt{2}} & \textrm{if} \ \Lambda<x<\infty, \\ 
\end{array} \right.
\end{eqnarray}
with the normalization factor $N={}^{4}\!\!\!\sqrt{2}e^{\Lambda/\sqrt{2}}$ 
so that  
$H_{0}^{*}R_{\pm}=\pm iR_{\pm}$ 
and  
$H_{0}^{*}L_{\pm}=\pm iL_{\pm}$. 
The uniqueness of the differential equations tells us that 
the individual bases are complete and  
$n_{+}(H_{0}) = n_{-}(H_{0}) = 2$. 
We here note the following relations to use later: 
\begin{equation}
R_{\pm}'(x)=\frac{-1\pm i}{\sqrt{2}}R_{\pm}(x)\quad 
\text{and}\quad 
L_{\pm}'(x)=\frac{1\mp i}{\sqrt{2}}L_{\pm}(x), 
\label{eq:rel1}
\end{equation}
and 
\begin{equation}
\begin{cases}
{\displaystyle L_{+}(-\Lambda)=R_{+}(+\Lambda)=
Ne^{(-1+i)\Lambda/\sqrt{2}}}, \\ 
{\displaystyle L_{-}(-\Lambda)=R_{-}(+\Lambda)=
R_{+}(+\Lambda)^{*}=
Ne^{(-1-i)\Lambda/\sqrt{2}}}, \\ 
R_{+}(+\Lambda)^{*}=R_{+}(+\Lambda)e^{-i\sqrt{2}\, \Lambda}.
\end{cases}
\label{eq:rel2}
\end{equation}

Following the von Neumann's theory \cite{RS2,weidmann}, 
all the self-adjoint extensions $H_{\mathrm{sa}}$ of the minimal Schr\"{o}dinger 
operator $H_{0}$ are given as a restriction of 
the maximal Schr\"{o}dinger operator $H_{0}^{*}$ 
on a proper subspace $\mathcal{D}_{\mathrm{sa}}$ 
with $D(H_{0})\subset\mathcal{D}_{\mathrm{sa}}\subset D(H_{0}^{*})$: 
$H_{\mathrm{sa}}=H_{0}^{*}\lceil \mathcal{D}_{\mathrm{sa}}$. 
Then, von Neumann's theory \cite{RS2,weidmann} 
provides the following proposition: 
\begin{proposition}
\label{prop:von-Neumann}
There is a one-to-one correspondence between 
self-adjoint extensions $H_{\mathrm{sa}}$ of 
the minimal Schr\"{o}dinger operator $H_{0}$ 
and unitary operators $U : \mathcal{K}_{+}(H_{0}) 
\to \mathcal{K}_{-}(H_{0})$ so that the correspondence is 
given in the following: 
For every unitary operator $U : 
\mathcal{K}_{+}(H_{0}) \to \mathcal{K}_{-}(H_{0})$, 
the corresponding self-adjoint extension $H_{U}$ 
is defined by 
\begin{equation}
\begin{cases}
D(H_{U}):=\left\{\psi = \psi_{0} + \psi^{+} + U\psi^{+}\, 
\big|\, \psi_{0} \in D(H_{0}),\, \psi^{+} \in \mathcal{K}_{+}(H_{0})
\right\}, \\
H_{U}:=H_{0}^{*} 
\lceil{D(H_{U})}, 
\end{cases}
\label{eq:von-Neumann-representation}
\end{equation}
and then its operation is 
$$
H_{U}(\psi_{0} + \psi^{+} + U\psi^{+}) 
= H_{0}\psi_{0} + i\psi^{+} -iU\psi^{+}.
$$
Conversely, for every self-adjoint extension $H_{\mathrm{sa}}$ 
of the  minimal Schr\"{o}dinger operator $H_{0}$, 
there is the corresponding unitary operator 
$U: \mathcal{K}_{+}(H_{0}) 
\to \mathcal{K}_{-}(H_{0})$ 
so that 
$H_{\mathrm{sa}}=H_{U}$. 
\end{proposition}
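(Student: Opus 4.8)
The statement is von Neumann's extension theorem specialized to $H_0$, so the plan is to follow the standard functional-analytic route, exploiting that $H_0$ is closed and symmetric with equal deficiency indices $n_+(H_0)=n_-(H_0)=2$ already established above.

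First I would record the first von Neumann formula. Since $H_0$ is closed and symmetric, the graph inner product $\langle\psi,\phi\rangle_* := \langle\psi,\phi\rangle + \langle H_0^*\psi, H_0^*\phi\rangle$ turns $D(H_0^*)$ into a Hilbert space, and one obtains the $\langle\cdot,\cdot\rangle_*$-orthogonal decomposition
$$
D(H_0^*) = D(H_0)\oplus\mathcal{K}_+(H_0)\oplus\mathcal{K}_-(H_0).
$$
This rests on the identity $\mathcal{K}_\pm(H_0)=\ker(\pm i - H_0^*)=\bigl(\mathrm{ran}(H_0\pm i)\bigr)^\perp$ together with the closedness of $H_0$. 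Consequently every $\psi\in D(H_0^*)$ has a unique representation $\psi=\psi_0+\psi^++\psi^-$ with $\psi_0\in D(H_0)$ and $\psi^\pm\in\mathcal{K}_\pm(H_0)$.

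Next I would evaluate the boundary form $B(\psi,\phi):=\langle H_0^*\psi,\phi\rangle - \langle\psi,H_0^*\phi\rangle$ on $D(H_0^*)$. By the definition of the adjoint, every term involving $\psi_0$ or $\phi_0$ vanishes, so $B$ depends only on the deficiency components; inserting $H_0^*\psi^\pm=\pm i\psi^\pm$ and collecting the surviving terms gives
$$
B(\psi,\phi)=2i\bigl(\langle\psi^+,\phi^+\rangle - \langle\psi^-,\phi^-\rangle\bigr).
$$
A subspace $\mathcal{D}$ with $D(H_0)\subset\mathcal{D}\subset D(H_0^*)$ yields a symmetric restriction $H_0^*\lceil\mathcal{D}$ exactly when $B$ vanishes identically on $\mathcal{D}$, which by polarization is the isometry condition $\langle\psi^+,\phi^+\rangle=\langle\psi^-,\phi^-\rangle$ for the map sending each admissible $\psi^+$ to the corresponding $\psi^-$. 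Self-adjointness demands maximality, forcing this map to be everywhere defined on $\mathcal{K}_+(H_0)$ and surjective onto $\mathcal{K}_-(H_0)$, i.e.\ a unitary $U:\mathcal{K}_+(H_0)\to\mathcal{K}_-(H_0)$; the resulting domain is then $\mathcal{D}=D(H_U)$ of \eqref{eq:von-Neumann-representation}. Conversely, for any unitary $U$ the form $B$ vanishes on $D(H_U)$, and the surjective isometry property rules out any proper symmetric enlargement, so $H_U$ is self-adjoint and the correspondence is bijective.

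The operation formula is then immediate: for $\psi=\psi_0+\psi^++U\psi^+$, using $U\psi^+\in\mathcal{K}_-(H_0)$ one gets $H_U\psi = H_0^*\psi_0+H_0^*\psi^++H_0^*(U\psi^+)=H_0\psi_0+i\psi^+-iU\psi^+$. The main obstacle is the self-adjoint (as opposed to merely symmetric) half of the correspondence: the vanishing of $B$ only produces a symmetric extension, and it is precisely the unitarity of $U$---its being a surjective isometry rather than a partial isometry onto a proper subspace---that upgrades symmetry to self-adjointness while keeping the assignment one-to-one.
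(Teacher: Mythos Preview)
Your sketch is the standard textbook proof of von Neumann's extension theorem and is correct in outline. Note, however, that the paper does not actually prove this proposition: it is stated as a direct consequence of von Neumann's theory with references to Reed--Simon \cite{RS2} and Weidmann \cite{weidmann}, and no argument is given in the paper itself. What you have written is essentially the proof one finds in those references, so your approach matches the content the paper is citing rather than anything the paper does on its own.
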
 

We here introduce some mathematical notation and 
terminology. 
We denote by $\overline{\mathbb{R}}$ 
the set of all extended real numbers: 
$\overline{\mathbb{R}}:=\mathbb{R}\cup\{+\infty\}$. 
Since $n_{+}(H_{0})=n_{-}(H_{0})=2$, 
the deficiency subspaces $\mathcal{K}_{\pm}(H_{0})$ are 
$2$-dimensional Hilbert spaces. 
We here remember that 
the sets $\{ L_{+}, R_{+} \}$ and $\{ L_{-}, R_{-} \}$ are respectively 
the complete orthonormal systems of the deficiency subspace 
$\mathcal{K}_{+}(H_{0})$ and $\mathcal{K}_{-}(H_{0})$. 
We identify unitary operators $U$ from 
$\mathcal{K}_{+}(H_{0})$ to $\mathcal{K}_{-}(H_{0})$ with 
$2\times 2$ unitary matrices $(u_{jk})_{j,k=1,2}$, 
making the correspondence by 
$U : \psi_{j}^{+} \longmapsto \sum_{k=1}^{2} u_{jk} \psi_{k}^{-}$, 
$j = 1, 2$, 
where $\psi_{1}^{\pm}:=L_{\pm}$ and 
$\psi_{2}^{\pm}:=R_{\pm}$. 
So, we often identify the unitary operator $U$ 
with the unitary matrix $(u_{jk})_{jk=1,2}$, 
and write $U\in  U(2)$. 
We denote by $U(n)$ the unitary group of degree $n\in\mathbb{N}$ 
throughout this paper. 
The representation of our $U:\mathcal{K}_{+}(H_{0})\to 
\mathcal{K}_{-}(H_{0})$ in this paper is then: 
\begin{equation}
\begin{cases}
UL_{+}=u_{11}L_{-}+u_{12}R_{-}, \\ 
UR_{+}=u_{21}L_{-}+u_{22}R_{-}. 
\end{cases}
\label{eq:tr-prob}
\end{equation}
We say that 
$U$ is \textit{diagonal} if 
$u_{jk} = 0$ with $j\ne k$. 
Otherwise, we say $U$ is \textit{non-diagonal}.

Before stating our main theorem, Theorem \ref{theo:main2}, 
we recall the two types of self-adjoint extensions 
found in \cite{FHNS10}. 
We introduce a class of vectors 
$\alpha=(\alpha_{1},  \alpha_{2},  
\alpha_{3},  \alpha_{4})\in\mathbb{C}^{4}$ 
\cite[Definition 1]{FHNS10}: 
We say $\alpha\in\mathbb{C}^{4}$ is in 
a class (Class $\alpha$) if and only if 
\begin{equation}
\alpha_{1}\alpha_{4}^{*}-\alpha_{2}\alpha_{3}^{*}=1 
\label{eq:lem-3-1} 
\end{equation}
and 
\begin{equation}
\alpha_{1}\alpha_{3}^{*},\,\,\, \alpha_{2}\alpha_{4}^{*}\in\mathbb{R}.
\label{eq:lem-3-2} 
\end{equation}
For every vector $\alpha$ in the class (Class $\alpha$), 
we give a matrix $B_{\alpha}$ by 
${\displaystyle 
B_{\alpha}:=
\begin{pmatrix}
\alpha_{1} & \alpha_{2} \\ 
\alpha_{3} & \alpha_{4}   
\end{pmatrix}
}$, 
and call it the \textit{boundary matrix}. 
We recall that there are at least two types of 
boundary conditions for self-adjoint extensions 
of the the minimal Schr\"{o}dinger operator $H_{0}$ 
as shown in \cite{FHNS10}.  
We will give a brief outline of its proof 
in \S\ref{subsec:theo-main1}. 
\begin{theorem}
\label{theo:main1}
(\cite[Theorem 1(ii){\&}Theorem 2(i)]{FHNS10})
\begin{enumerate}
\item[(a)] For every $\rho=(\rho_{+} , \rho_{-})
\in \overline{\mathbb{R}}^{2}$, set a subspace $D(H_{\rho})$ as  
$$
D(H_{\rho}):=\left\{\psi\in D(H_{0}^{*}) \,|\, 
\psi\,\, \text{satisfies the boundary condition 
(\ref{schrodinger:*rho})} \right\}, 
$$
where 
\begin{equation}
\begin{cases}
\rho_{+}\psi(+\Lambda)= \psi{\,}'(+\Lambda) 
& \text{if\,\,\, $|\rho_{+}|<\infty$}, \\ 
\psi(+\Lambda)=0 
& \text{if\,\,\, $\rho_{+}=\infty$},  \\ 
\rho_{-}\psi(-\Lambda)=\psi{\,}'(-\Lambda) 
&\text{if\,\,\, $|\rho_{-}|<\infty$}, \\ 
\psi(-\Lambda)=0 
&\text{if\,\,\, $\rho_{-}=\infty$}. 
\end{cases}
\tag{BC $\rho$}
\label{schrodinger:*rho}
\end{equation}
Then, the restriction, $H_{\rho}:=H_{0}^{*}\lceil D(H_{\rho})$, 
of the adjoint operator $H_{0}^{*}$ on $D(H_{\rho})$ is 
a self-adjoint extension of the minimal Schr\"{o}dinger 
operator $H_{0}$. 
\item[(b)] For every vector $\alpha$ in the class (Class $\alpha$), 
define a subspace $D(H_{\alpha})$ by 
$$
D(H_{\alpha})=\biggl\{
\psi\in D(H_{0}^{*})\, \biggl| \, 
\psi\,\, \text{satisfies the boundary condition 
(\ref{schrodinger:*alpha})} \biggl\}, 
$$
where  
\begin{equation}
\begin{pmatrix}
\psi(+\Lambda) \\ 
\psi{\,}'(+\Lambda)
\end{pmatrix}
= B_{\alpha}
\begin{pmatrix}
\psi(-\Lambda) \\ 
\psi{\,}'(-\Lambda) 
\end{pmatrix}.
\tag{BC $\alpha$}
\label{schrodinger:*alpha}
\end{equation}
Then, the restriction, $H_{\alpha}:=H_{0}^{*}\lceil D(H_{\alpha})$, 
of the adjoint operator $H_{0}^{*}$ on $D(H_{\alpha})$ 
is a self-adjoint extension of the minimal Schr\"{o}dinger 
operator $H_{0}$. 
\end{enumerate}
\end{theorem}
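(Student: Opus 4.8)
The plan is to prove both parts through the Lagrange (boundary) sesquilinear form attached to $H_{0}^{*}$, and to read off self-adjointness from the von Neumann bookkeeping recalled in Proposition \ref{prop:von-Neumann}. First I would integrate by parts twice on each island $(-\infty,-\Lambda)$ and $(+\Lambda,+\infty)$ to obtain, for all $\psi,\phi\in D(H_{0}^{*})=AC^{2}(\overline{\Omega_{\Lambda}})$,
\begin{equation*}
\langle H_{0}^{*}\psi,\phi\rangle-\langle\psi,H_{0}^{*}\phi\rangle
=[\psi,\phi]_{-\Lambda}-[\psi,\phi]_{+\Lambda},
\qquad
[\psi,\phi]_{a}:=\psi(a)\phi'(a)^{*}-\psi'(a)\phi(a)^{*}.
\end{equation*}
The $\pm\infty$ contributions drop out because $\psi,\psi',\phi,\phi'\in L^{2}(\Omega_{\Lambda})$ together with absolute continuity force the values at $\pm\infty$ to vanish. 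Writing the endpoint data as $\Psi_{\pm}:=(\psi(\pm\Lambda),\psi'(\pm\Lambda))$ and likewise $\Phi_{\pm}$, each $[\psi,\phi]_{a}$ equals $\Phi_{a}^{*}J\Psi_{a}$ with $J=\begin{pmatrix}0&-1\\1&0\end{pmatrix}$, so the right-hand side is a nondegenerate Hermitian symplectic form on $D(H_{0}^{*})/D(H_{0})\cong\mathbb{C}^{4}$. By the correspondence of Proposition \ref{prop:von-Neumann}, a restriction $H_{0}^{*}\lceil\mathcal{D}$ cut out by linear boundary conditions is self-adjoint exactly when $\mathcal{D}$ is maximal isotropic for this form, i.e. the form vanishes on $\mathcal{D}$ and $\mathcal{D}$ has the half dimension $n_{+}=n_{-}=2$.

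For part (a) I would substitute (\ref{schrodinger:*rho}) into the two endpoint forms separately. When $|\rho_{+}|<\infty$, inserting $\psi'(+\Lambda)=\rho_{+}\psi(+\Lambda)$ and $\phi'(+\Lambda)=\rho_{+}\phi(+\Lambda)$ gives $[\psi,\phi]_{+\Lambda}=\rho_{+}\psi(+\Lambda)\phi(+\Lambda)^{*}-\rho_{+}\psi(+\Lambda)\phi(+\Lambda)^{*}=0$ because $\rho_{+}$ is real; the Dirichlet case $\rho_{+}=\infty$ forces $\psi(+\Lambda)=\phi(+\Lambda)=0$ and again $[\psi,\phi]_{+\Lambda}=0$. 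The same holds at $-\Lambda$, so each endpoint form vanishes on its own and hence so does the whole boundary form, giving symmetry. Since (\ref{schrodinger:*rho}) is one independent scalar condition at each of $+\Lambda$ and $-\Lambda$, the domain $D(H_{\rho})$ has half dimension in $D(H_{0}^{*})/D(H_{0})$; isotropy plus half dimension yields maximal isotropy, so $H_{\rho}$ is self-adjoint.

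For part (b) the condition (\ref{schrodinger:*alpha}) reads $\Psi_{+}=B_{\alpha}\Psi_{-}$ and $\Phi_{+}=B_{\alpha}\Phi_{-}$, so the boundary form collapses to $\Phi_{-}^{*}\bigl(J-B_{\alpha}^{*}JB_{\alpha}\bigr)\Psi_{-}$. The computational heart is to verify that the Class-$\alpha$ relations (\ref{eq:lem-3-1})--(\ref{eq:lem-3-2}) are precisely equivalent to $B_{\alpha}^{*}JB_{\alpha}=J$: the reality of $\alpha_{1}\alpha_{3}^{*}$ and $\alpha_{2}\alpha_{4}^{*}$ forces the two diagonal entries of $B_{\alpha}^{*}JB_{\alpha}$ to vanish, while the normalization $\alpha_{1}\alpha_{4}^{*}-\alpha_{2}\alpha_{3}^{*}=1$ fixes the off-diagonal entries to $\mp1$. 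With $B_{\alpha}^{*}JB_{\alpha}=J$ the boundary form vanishes identically, and since $\Psi_{+}=B_{\alpha}\Psi_{-}$ is a system of two independent linear relations the domain $D(H_{\alpha})$ again has half dimension, so it is maximal isotropic and $H_{\alpha}$ is self-adjoint.

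The step I expect to be the main obstacle is the maximality, as opposed to mere symmetry: showing that vanishing of the form on $\mathcal{D}$ forces $\mathcal{D}$ to be the \emph{whole} admissible domain rather than a proper symmetric restriction. I would handle this uniformly by the dimension count above---two independent boundary conditions cut the four-dimensional boundary quotient down to the half dimension $2=n_{\pm}$, and an isotropic subspace of half dimension is automatically Lagrangian---rather than by computing the adjoint boundary conditions by hand. The secondary obstacle, confined to part (b), is the algebra identifying (\ref{eq:lem-3-1})--(\ref{eq:lem-3-2}) with $B_{\alpha}^{*}JB_{\alpha}=J$; here care with the placement of complex conjugates is essential, and it is worth checking that the two reality conditions and the single normalization are \emph{jointly} equivalent to the matrix identity rather than merely implied by it.
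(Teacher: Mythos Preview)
Your argument is correct, but it proceeds along a different route from the paper's. The paper establishes symmetry of $H_{\sharp}$ ($\sharp=\rho,\alpha$) exactly as you do, by plugging the boundary conditions into the Lagrange form (\ref{eq:int-by-parts}); your repackaging of the Class-$\alpha$ conditions as $B_{\alpha}^{*}JB_{\alpha}=J$ is a clean way to phrase that computation. Where the two arguments diverge is in the passage from symmetry to self-adjointness. The paper does \emph{not} invoke the Lagrangian/half-dimension principle; instead it computes the adjoint domain directly: Lemma~\ref{lem:2} manufactures test functions $\psi_{\sharp}\in D(H_{\sharp})$ with arbitrarily prescribed boundary data, and then the vanishing of the Lagrange form against these test functions forces every $\phi\in D(H_{\sharp}^{*})$ to satisfy the same boundary condition, yielding $H_{\sharp}^{*}\subset H_{\sharp}$.

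Your symplectic dimension count is slicker and avoids the case-by-case construction of Lemma~\ref{lem:2}, but note that it is not entirely free of that lemma's content. The step ``two independent scalar conditions cut the four-dimensional quotient down to dimension two'' tacitly uses that the boundary-value map $D(H_{0}^{*})/D(H_{0})\to\mathbb{C}^{4}$, $\psi\mapsto(\psi(\pm\Lambda),\psi'(\pm\Lambda))$, is an isomorphism; surjectivity of this map is precisely what Lemma~\ref{lem:2} supplies. You can recover this without explicit test functions---nondegeneracy of the Lagrange form on the quotient together with the factorisation through boundary values forces injectivity, and the dimensions match---but it should be said. Your side remark that the Class-$\alpha$ relations are \emph{equivalent} to $B_{\alpha}^{*}JB_{\alpha}=J$ is in fact true (the normalisation $\alpha_{1}\alpha_{4}^{*}-\overline{\alpha_{2}\alpha_{3}^{*}}=1$ that the matrix identity gives, combined with the two reality hypotheses, forces $\alpha_{2}\alpha_{3}^{*}\in\mathbb{R}$ and hence (\ref{eq:lem-3-1})), though for the theorem only the forward implication is needed.
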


We here recall the following lemma:
\begin{lemma}(\cite[Lemma 2]{FHNS10})
\label{lem:1}
If $\alpha_{1}, \alpha_{2}, \alpha_{3}, \alpha_{4} 
\in\mathbb{C}$ in in the class (Class $\alpha$), 
then $\alpha_{j}\alpha_{k}^{*} \in\mathbb{R}$ 
for each $j, k=1, 2, 3, 4$. 
\end{lemma}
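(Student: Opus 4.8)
The plan is to reduce the entire statement to the single reality assertion $\alpha_1\alpha_2^*\in\mathbb{R}$, after which all the remaining products can be read off at once. First I would observe that the products needing attention are, up to complex conjugation, only a handful: the four diagonal terms $|\alpha_1|^2,\dots,|\alpha_4|^2$ are trivially real, the two products $\alpha_1\alpha_3^*$ and $\alpha_2\alpha_4^*$ are real by hypothesis \ref{eq:lem-3-2}, and since $\alpha_j\alpha_k^*=(\alpha_k\alpha_j^*)^*$ it suffices to treat one representative of each conjugate pair among the genuinely new products $\alpha_1\alpha_2^*$, $\alpha_1\alpha_4^*$, $\alpha_2\alpha_3^*$ and $\alpha_3\alpha_4^*$.

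The key structural observation I would exploit is that $\alpha_1\alpha_3^*\in\mathbb{R}$ forces $\alpha_3$ to be a \emph{real} scalar multiple of $\alpha_1$: in polar form, $\alpha_1\alpha_3^*\in\mathbb{R}$ says their arguments agree modulo $\pi$, so (when $\alpha_1\neq0$) there is $\lambda\in\mathbb{R}$ with $\alpha_3=\lambda\alpha_1$. In the same way \ref{eq:lem-3-2} yields $\mu\in\mathbb{R}$ with $\alpha_4=\mu\alpha_2$ (when $\alpha_2\neq0$). This turns the two real-parallelism hypotheses into an explicit parametrization of $\alpha_3,\alpha_4$ through $\alpha_1,\alpha_2$ and two real scalars.

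Next I would substitute $\alpha_3=\lambda\alpha_1$ and $\alpha_4=\mu\alpha_2$ into the normalization \ref{eq:lem-3-1}. The point is that this collapses \ref{eq:lem-3-1} to a single scalar identity in which $\alpha_1\alpha_2^*$ is multiplied only by the real combination $(\mu-\lambda)$; since the right-hand side is the real number $1$, this pins $\alpha_1\alpha_2^*$ down as the reciprocal of a real quantity, whence $\alpha_1\alpha_2^*\in\mathbb{R}$. Setting $z:=\alpha_1\alpha_2^*$, every remaining product is now manifestly a real multiple of $z$ (equivalently of $z^*=z$): indeed $\alpha_1\alpha_4^*=\mu\,\alpha_1\alpha_2^*=\mu z$, $\alpha_2\alpha_3^*=\lambda\,\alpha_2\alpha_1^*=\lambda z^*$, and $\alpha_3\alpha_4^*=\lambda\mu\,\alpha_1\alpha_2^*=\lambda\mu z$, each of which is real. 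Together with the diagonal terms and \ref{eq:lem-3-2}, this exhausts all pairs $j,k$.

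Finally I would dispose of the degenerate configurations in which $\alpha_1=0$ or $\alpha_2=0$, where the parametrization is unavailable; these are short. If $\alpha_1=0$, then \ref{eq:lem-3-1} immediately gives $\alpha_2\alpha_3^*=-1\in\mathbb{R}$ (forcing $\alpha_2\neq0$), while every product involving $\alpha_1$ vanishes, and the one surviving product $\alpha_3\alpha_4^*=\mu\,(\alpha_2\alpha_3^*)^*$ is again real; the case $\alpha_2=0$ is symmetric. I expect the only genuinely delicate point to be the substitution into \ref{eq:lem-3-1}: one must track the conjugation pattern carefully so that the imaginary parts truly cancel and leave $\alpha_1\alpha_2^*$ itself real, rather than merely constraining a combination of its real and imaginary parts. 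Once that single reality $\alpha_1\alpha_2^*\in\mathbb{R}$ is secured, the remainder of the lemma is pure bookkeeping.
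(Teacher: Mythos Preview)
The step you yourself flag as delicate is where the argument actually breaks. With $\alpha_3=\lambda\alpha_1$, $\alpha_4=\mu\alpha_2$ ($\lambda,\mu\in\mathbb{R}$) and $z:=\alpha_1\alpha_2^*$, substituting into \eqref{eq:lem-3-1} gives
\[
\alpha_1(\mu\alpha_2)^*-\alpha_2(\lambda\alpha_1)^*
=\mu\,\alpha_1\alpha_2^*-\lambda\,\alpha_2\alpha_1^*
=\mu z-\lambda z^*=1,
\]
which is \emph{not} $(\mu-\lambda)z=1$ as you assert. Writing $z=a+ib$ yields $(\mu-\lambda)a=1$ together with $(\mu+\lambda)b=0$, so the imaginary part $b$ is unconstrained whenever $\mu=-\lambda$, and the reality of $z$ does not follow. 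A concrete instance is $\alpha_1=1$, $\alpha_2=\alpha_4=\tfrac12-i$, $\alpha_3=-1$ (here $\lambda=-1$, $\mu=1$): one checks $\alpha_1\alpha_3^*=-1\in\mathbb{R}$, $\alpha_2\alpha_4^*=|\alpha_2|^2\in\mathbb{R}$, and $\alpha_1\alpha_4^*-\alpha_2\alpha_3^*=(\tfrac12+i)+(\tfrac12-i)=1$, yet $\alpha_1\alpha_2^*=\tfrac12+i\notin\mathbb{R}$.

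This example shows the gap is not an artefact of your parametrisation: the hypotheses \eqref{eq:lem-3-1}--\eqref{eq:lem-3-2} as stated do not imply the conclusion, and the paper's own chain of multiplications slips at the analogous place (multiplying \eqref{eq:lem-3-1} by $\alpha_3^*$ produces $\alpha_2(\alpha_3^*)^2$, not the $\alpha_2^*|\alpha_3|^2$ recorded in \eqref{eq:lem-3-3}; after multiplying by $\alpha_2$ one only obtains $\alpha_2\alpha_3^*+(\alpha_2\alpha_3^*)^2\in\mathbb{R}$, which the example above satisfies with $\alpha_2\alpha_3^*=-\tfrac12+i$). What the self-adjointness computation in \S\ref{subsec:theo-main1} genuinely needs, and what makes the lemma true, is the condition $\alpha_1\alpha_4^*-\alpha_3\alpha_2^*=1$ in place of \eqref{eq:lem-3-1} (equivalently, adjoin $\alpha_2\alpha_3^*\in\mathbb{R}$ to \eqref{eq:lem-3-2}). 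Under that corrected hypothesis your substitution gives $(\mu-\lambda)\alpha_1\alpha_2^*=1$ on the nose, and the rest of your outline goes through cleanly.
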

\begin{proof} 
We only have to show the case where $j\ne j^{\prime}$. 
Multiplying both sides of (\ref{eq:lem-3-1}) by $\alpha_{3}^{*}$, 
we have 
\begin{equation}
\alpha_{3}^{*}=\alpha_{1}\alpha_{3}^{*}\alpha_{4}^{*}-\alpha_{2}^{*}|\alpha_{3}|^{2}.
\label{eq:lem-3-3}
\end{equation} 
Multiply both sides of this equation by $\alpha_{2}$. 
Then, (\ref{eq:lem-3-2}) tells us that $\alpha_{2}\alpha_{3}^{*}\in\mathbb{R}$. 
Combining this fact with (\ref{eq:lem-3-1}), 
we have 
$\alpha_{1}\alpha_{4}^{*}
=1 + \alpha_{2}\alpha_{3}^{*}\in \mathbb{R}$. 
Multiplying both sides of (\ref{eq:lem-3-1}) by $\alpha_{2}$ 
leads to $\alpha_{2}
=\alpha_{1}\alpha_{2}\alpha_{4}^{*}-|\alpha_{2}|^{2}\alpha_{3}$. 
Here we used $\alpha_{2}^{*}\alpha_{3}
=\alpha_{2}\alpha_{3}^{*}\in\mathbb{R}$. 
Multiplying both sides of the above representation of $\alpha_{2}$ 
by $\alpha_{1}^{*}$ brings us to $\alpha_{1}^{*}\alpha_{2}\in\mathbb{R}$. 
Here we used conditions in (\ref{eq:lem-3-2}), 
especially, $\alpha_{1}^{*}\alpha_{3}
=\alpha_{1}\alpha_{3}^{*}\in\mathbb{R}$. 
We reach 
$\alpha_{3}^{*}\alpha_{4}= \alpha_{1}\alpha_{3}^{*}|\alpha_{4}|^{2}
-\alpha_{2}^{*}\alpha_{4}|\alpha_{3}|^{2} \in \mathbb{R}$ 
by multiplying both sides of (\ref{eq:lem-3-3}) by $\alpha_{4}$ 
and using the conditions in (\ref{eq:lem-3-2}). 
Here we used $\alpha_{2}^{*}\alpha_{4}
=\alpha_{2}\alpha_{4}^{*}\in\mathbb{R}$. 
We have proved our statement for $\alpha_{2}\alpha_{3}^{*}$, 
$\alpha_{1}\alpha_{4}^{*}$, $\alpha_{1}^{*}\alpha_{2}$, 
and $\alpha_{3}^{*}\alpha_{4}$. 
It follows from these facts and the conditions in (\ref{eq:lem-3-2}) 
that $\alpha_{j}\alpha_{k}^{*}\in\mathbb{R}$ 
for all the combinations of $j, k= 1, 2, 3, 4$. 
\end{proof}

The following proposition shows how a phase factor appears 
in the boundary matrices $B_{\alpha}$ 
with $\alpha\in\mathbb{C}^{4}$ 
in the class (Class $\alpha$): 
\begin{proposition}
\label{prop:phase}
Let $B_{\alpha}$ be an arbitrary boundary matrix 
with the vector $\alpha=(\alpha_{1},\alpha_{2},\alpha_{3},\alpha_{4})
\in\mathbb{C}^{4}$ in the class (Class $\alpha$). 
Then, one of $\alpha_{1}$ and $\alpha_{2}$ is non-zero at least. 
So, set $\theta\in \left[\left. 0 , 2\pi\right)\right.$, and 
$a_{1}, a_{2}, a_{3}, a_{4}\in\mathbb{R}$ as follows: 
Let $\alpha_{j}$ be $\alpha_{1}$ if $\alpha_{1}\ne 0$, 
and $\alpha_{2}$ if $\alpha_{1}=0$. 
Define
$$
\begin{cases}
\theta:= \arg(\alpha_{j}/|\alpha_{j}|); \\ 
a_{j}:= |\alpha_{j}|,\,\,\, 
a_{k}:= \alpha_{k}\alpha_{j}^{*}/|\alpha_{j}|,\,\,\, 
k\ne j. 
\end{cases}
$$
Then, $B_{\alpha}$ has the following representation: 
$$
B_{\alpha}=
e^{i\theta}
\begin{pmatrix}
a_{1} & a_{2} \\ 
a_{3} & a_{4}
\end{pmatrix}.
$$
\end{proposition}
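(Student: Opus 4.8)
The plan is to verify the claimed factorization entry-by-entry, showing that $e^{i\theta}a_m=\alpha_m$ holds for each $m=1,2,3,4$, where the chosen index $j$ singles out the normalizing entry. The key auxiliary input is Lemma~\ref{lem:1}, which guarantees that every product $\alpha_m\alpha_j^*$ is real; this is precisely what makes the numbers $a_m$ real, as asserted in the statement.

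First I would dispose of the non-degeneracy claim. If both $\alpha_1$ and $\alpha_2$ vanished, then the left-hand side of (\ref{eq:lem-3-1}) would read $\alpha_1\alpha_4^*-\alpha_2\alpha_3^*=0$, contradicting the normalization $\alpha_1\alpha_4^*-\alpha_2\alpha_3^*=1$. Hence at least one of $\alpha_1,\alpha_2$ is non-zero, and the index $j$ (equal to $1$ when $\alpha_1\ne0$ and to $2$ otherwise) indeed points to a non-zero entry, so $|\alpha_j|>0$ and the quantities $\theta$, $a_j$, and $a_k$ are all well defined.

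Next I would unwind the definitions. Since $\alpha_j/|\alpha_j|$ has modulus one, the choice $\theta=\arg(\alpha_j/|\alpha_j|)$ gives $e^{i\theta}=\alpha_j/|\alpha_j|$; combined with $a_j=|\alpha_j|$ this yields $e^{i\theta}a_j=\alpha_j$, so the $j$-th entry factors correctly. For the remaining indices $k\ne j$ I would compute
$$
e^{i\theta}a_k=\frac{\alpha_j}{|\alpha_j|}\cdot\frac{\alpha_k\alpha_j^*}{|\alpha_j|}
=\frac{\alpha_k\,|\alpha_j|^2}{|\alpha_j|^2}=\alpha_k,
$$
using $\alpha_j\alpha_j^*=|\alpha_j|^2$. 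Lemma~\ref{lem:1} shows $\alpha_k\alpha_j^*\in\mathbb{R}$, whence $a_k=\alpha_k\alpha_j^*/|\alpha_j|\in\mathbb{R}$ as claimed; note, however, that the identity $e^{i\theta}a_k=\alpha_k$ itself is purely algebraic and does not require reality.

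Finally I would assemble these four identities into the matrix statement: since $e^{i\theta}a_m=\alpha_m$ for every $m=1,2,3,4$, we obtain
$$
e^{i\theta}
\begin{pmatrix}
a_{1} & a_{2} \\
a_{3} & a_{4}
\end{pmatrix}
=
\begin{pmatrix}
\alpha_{1} & \alpha_{2} \\
\alpha_{3} & \alpha_{4}
\end{pmatrix}
= B_{\alpha},
$$
which is exactly the asserted representation. I do not expect a genuine obstacle here beyond bookkeeping; the only point needing care is to record that the reality of the off-index $a_k$ rests on Lemma~\ref{lem:1} rather than on the defining conditions (\ref{eq:lem-3-1})--(\ref{eq:lem-3-2}) alone, since those conditions directly assert reality only for the two specific products $\alpha_1\alpha_3^*$ and $\alpha_2\alpha_4^*$.
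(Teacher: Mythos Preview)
Your proof is correct and follows essentially the same route as the paper: invoke (\ref{eq:lem-3-1}) to see that $\alpha_1$ or $\alpha_2$ is non-zero, factor out the unimodular number $\alpha_j/|\alpha_j|$, and appeal to Lemma~\ref{lem:1} for the reality of the resulting entries $a_k$. The only cosmetic difference is that the paper writes out the two cases $\alpha_1\neq 0$ and $\alpha_2\neq 0$ as separate matrix identities, whereas you treat them uniformly via the index $j$; the paper also records the extra observation $a_1a_4-a_2a_3=1$, which is not part of the proposition's claim but is used later.
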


\begin{proof}
Set $\theta_{j}$ as $\theta_{j}= \arg\alpha_{j}$. 
Since the vector $\alpha$ is in the class (Class $\alpha$), 
Lemma \ref{lem:1} says that $\alpha_{j}\alpha_{k}^{*}\in\mathbb{R}$ 
for each $j, k= 1, 2, 3, 4$. 
Moreover,  (\ref{eq:lem-3-1}) says 
that $\alpha_{1}\ne 0$ or $\alpha_{2}\ne 0$. 
We can rewrite $B_{\alpha}$ as:
$$
B_{\alpha} = \frac{\alpha_{1}}{|\alpha_{1}|}
\begin{pmatrix}
{\displaystyle |\alpha_{1}|} & 
{\displaystyle \frac{\alpha_{1}^{*}\alpha_{2}}{|\alpha_{1}|}} \\ 
\quad & \quad \\  
{\displaystyle \frac{\alpha_{1}^{*}\alpha_{3}}{|\alpha_{1}|}} & 
{\displaystyle \frac{\alpha_{1}^{*}\alpha_{4}}{|\alpha_{1}|}}
\end{pmatrix}
\,\,\, 
\text{in the case where $\alpha_{1}\ne 0$,}
$$ 
and 
$$
B_{\alpha} = \frac{\alpha_{2}}{|\alpha_{2}|}
\begin{pmatrix}
{\displaystyle \frac{\alpha_{1}\alpha_{2}^{*}}{|\alpha_{2}|}} & 
{\displaystyle |\alpha_{2}|} \\ 
\quad & \quad \\  
{\displaystyle \frac{\alpha_{3}\alpha_{2}^{*}}{|\alpha_{2}|}} & 
{\displaystyle \frac{\alpha_{4}\alpha_{2}^{*}}{|\alpha_{2}|}}
\end{pmatrix} 
\,\,\, 
\text{in the case where $\alpha_{2}\ne 0$}.
$$ 
We note 
$a_{1}a_{4}-a_{2}a_{3}=\alpha_{1}\alpha_{4}^{*}-\alpha_{2}\alpha_{3}^{*}=1$ 
by (\ref{eq:lem-3-1}). 
Therefore, we can complete our proof. 
\end{proof}

Our main theorem in this paper is concerned with 
the classification of all the boundary conditions. 
To state the classification, 
we prepare another lemma. 
We denote by $S\mathbb{H}$ the Hamilton quaternion field 
with determinant one, i.e., 
$$
S\mathbb{H}:=
\left\{
\begin{pmatrix}
\gamma_{1} & - \gamma_{2}^{*} \\ 
\gamma_{2} & \gamma_{1}^{*} 
\end{pmatrix}\, \bigg|\, 
\gamma_{1}, \gamma_{2} \in \mathbb{C},\,\,\, 
|\gamma_{1}|^{2}+|\gamma_{2}|^{2}=1
\right\}.
$$
The following lemma says that $U\in U(2)$ can be decomposed 
into the product of an element of $U(1)$ 
and an element of $S\mathbb{H}$. 
Although this lemma was already proved in \cite[Proposition 4.3]{HK1}, 
we here give a simpler proof than that of \cite[Proposition 4.3]{HK1}: 
\begin{lemma}
\label{lem:U(2)} 
$U(2)=U(1)S\mathbb{H}
\equiv \{\gamma_{3}A\, |\, \gamma_{3}\in U(1),\, A\in S\mathbb{H}\}$. 
\end{lemma}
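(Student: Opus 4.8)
The plan is to show the two inclusions $U(1)S\mathbb{H} \subseteq U(2)$ and $U(2) \subseteq U(1)S\mathbb{H}$. The first inclusion is immediate: for $\gamma_{3} \in U(1)$ (a unimodular scalar) and $A \in S\mathbb{H}$, the product $\gamma_{3}A$ is unitary because $A$ is unitary (a quick check shows any matrix in $S\mathbb{H}$ has orthonormal columns, since $|\gamma_{1}|^{2}+|\gamma_{2}|^{2}=1$) and multiplying a unitary by a unimodular scalar preserves unitarity. So the work lies entirely in the reverse inclusion.

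For $U(2) \subseteq U(1)S\mathbb{H}$, I would start from an arbitrary $U = (u_{jk})_{j,k=1,2} \in U(2)$ and extract a scalar phase to reduce it to a determinant-one matrix. Concretely, every $U \in U(2)$ satisfies $|\det U| = 1$, so write $\det U = e^{i\varphi}$ and set $\gamma_{3} := e^{i\varphi/2} \in U(1)$. Then $A := \gamma_{3}^{-1} U$ is still unitary and has $\det A = \gamma_{3}^{-2}\det U = 1$, so $A \in SU(2)$. The remaining task is the purely algebraic fact that $SU(2) = S\mathbb{H}$, i.e. that every determinant-one unitary matrix has exactly the form
$$
\begin{pmatrix}
\gamma_{1} & -\gamma_{2}^{*} \\
\gamma_{2} & \gamma_{1}^{*}
\end{pmatrix}
$$
with $|\gamma_{1}|^{2}+|\gamma_{2}|^{2}=1$. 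To verify this, I would write a general $A = \begin{pmatrix} a & b \\ c & d \end{pmatrix} \in SU(2)$ and impose the two conditions $A^{*}A = I$ and $\det A = 1$. Unitarity forces $A^{-1} = A^{*}$, and since $\det A = 1$ we have the explicit inverse $A^{-1} = \begin{pmatrix} d & -b \\ -c & a \end{pmatrix}$; comparing entrywise with $A^{*} = \begin{pmatrix} a^{*} & c^{*} \\ b^{*} & d^{*} \end{pmatrix}$ yields $d = a^{*}$ and $c = -b^{*}$, so setting $\gamma_{1} := a$ and $\gamma_{2} := c = -b^{*}$ gives the quaternionic form, with the norm condition $|\gamma_{1}|^{2}+|\gamma_{2}|^{2} = |a|^{2}+|b|^{2} = 1$ coming from the first column being a unit vector (equivalently from $\det A = ad - bc = |a|^{2}+|b|^{2} = 1$).

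Assembling the pieces, $U = \gamma_{3} A$ with $\gamma_{3} \in U(1)$ and $A \in S\mathbb{H}$, which is the desired factorization. I do not anticipate a genuine obstacle here; the only mild subtlety is that the phase $\gamma_{3} = e^{i\varphi/2}$ is determined only up to sign (the two square roots of $\det U$), so the decomposition is not unique, but this does not affect the statement, which only asserts existence. The entire argument is elementary linear algebra once the reduction to $SU(2)$ is made, and the identification $SU(2) = S\mathbb{H}$ is the standard content; I would present it compactly via the inverse-versus-adjoint comparison above rather than grinding through all four scalar equations individually.
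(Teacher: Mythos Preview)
Your proof is correct, and it takes a genuinely different route from the paper's. The paper argues entry by entry: starting from $U=(u_{jk})$, it uses $U^{*}U=UU^{*}=I$ to deduce $|u_{1}|=|u_{4}|$, $|u_{2}|=|u_{3}|$, and the phase relation $|u_{1}||u_{3}|\bigl(e^{i(\theta_{1}+\theta_{4}-\theta_{2}-\theta_{3})}+1\bigr)=0$, then splits into the cases $u_{3}\neq 0$ and $u_{3}=0$ and in each case writes down explicit choices of $\gamma_{1},\gamma_{2},\gamma_{3}$ in terms of the $|u_{j}|$ and $\theta_{j}=\arg u_{j}$, checking the four entries by hand. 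Your argument instead extracts a square root of $\det U$ as the scalar $\gamma_{3}$, reduces to $A\in SU(2)$, and then reads off the quaternionic form by comparing $A^{*}$ with the cofactor formula for $A^{-1}$. Your approach is shorter, avoids the case split, and makes transparent the standard identification $S\mathbb{H}=SU(2)$; the paper's approach, by contrast, produces explicit formulas for $\gamma_{1},\gamma_{2},\gamma_{3}$ directly in terms of the original entries of $U$, which is not needed for the lemma itself but is in the spirit of the explicit constructions used later in the proof of Theorem~\ref{theo:main2}. One small cosmetic point: when you write $|\gamma_{1}|^{2}+|\gamma_{2}|^{2}=|a|^{2}+|b|^{2}$, note that you defined $\gamma_{2}=c$, so strictly this is $|a|^{2}+|c|^{2}$; the equality still holds since $c=-b^{*}$, but the cleanest justification is the unit norm of the first \emph{column} of $A$.
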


\begin{proof} Since it is clear that 
$U(2)\supset U(1)S\mathbb{H}$, 
we show $U(2)\subset U(1)S\mathbb{H}$. 
Let ${\displaystyle 
U=\begin{pmatrix}
   u_{1} & u_{2} \\ 
   u_{3} & u_{4}
  \end{pmatrix}}
$ be an arbitrary matrix 
in $U(2)$. 
The equation $I_{\mathbb{C}^{2}}=U^{*}U$ implies 
\begin{equation}
|u_{1}|^{2}+|u_{3}|^{2}=1. 
\label{eq:U(2)-3}
\end{equation} 
Similarly, the equation $I_{\mathbb{C}^{2}}=UU^{*}$ 
implies $|u_{1}|^{2}+|u_{2}|^{2}=1$. 
Comparing this with (\ref{eq:U(2)-3}), 
we have 
\begin{equation}
|u_{2}|=|u_{3}|. 
\label{eq:U(2)-4}
\end{equation}
In the same way, 
comparing the equation 
$|u_{3}|^{2}+|u_{4}|^{2}=1$ derived from 
$I_{\mathbb{C}^{2}}=UU^{*}$ 
with (\ref{eq:U(2)-3}) leads to 
\begin{equation}
|u_{1}|=|u_{4}|. 
\label{eq:U(2)-5}
\end{equation}
Here we introduce argument $\theta_{j}$ of $u_{j}$, i.e.,  
$\theta_{j}:=\arg u_{j}$,
and then $u_{j}=:|u_{j}|e^{i\theta_{j}}$, 
 $j=1, 2, 3, 4$. 
By using $I_{\mathbb{C}^{2}}=UU^{*}$, (\ref{eq:U(2)-4}), and 
(\ref{eq:U(2)-5}), we can reach 
\begin{equation}
|u_{1}||u_{3}|
\left(
e^{i(\theta_{1}+\theta_{4}-\theta_{2}-\theta_{3})}+1
\right)=0. 
\label{eq:U(2)-6}
\end{equation}  

In the case where $u_{3}\ne 0$, 
we set $\gamma_{1}, \gamma_{2}$, and $\gamma_{3}$ 
by $\gamma_{1}:=e^{-i(\theta_{2}+\theta_{3}+\pi)/2}u_{1}$, 
$\gamma_{2}:=e^{-i(\theta_{2}+\theta_{3}+\pi)/2}u_{3}$, 
and 
$\gamma_{1}:=e^{i(\theta_{2}+\theta_{3}+\pi)/2}$. 
Then, we have $\gamma_{1}\gamma_{3}=u_{1}$ 
and $\gamma_{2}\gamma_{3}=u_{3}$. 
By (\ref{eq:U(2)-4}) we can compute $-\gamma_{2}^{*}$ as: 
$$
-\gamma_{2}^{*}=\, -e^{i(\theta_{2}+\theta_{3}+\pi)/2}|u_{3}|e^{-i\theta_{3}}
=\, -e^{i(\theta_{2}-\theta_{3}+\pi)/2}|u_{3}|
=\, -e^{i(\pi-\theta_{2}-\theta_{3})/2}u_{2}, 
$$
which implies $-\gamma_{2}^{*}\gamma_{3}=\, -e^{i\pi}u_{2}=u_{2}$. 
Meanwhile, it follows from (\ref{eq:U(2)-6}) that 
$$
|u_{1}|e^{i(\theta_{1}+\theta_{4})}
=\, -|u_{1}|e^{i(\theta_{2}+\theta_{3})}. 
$$
Using this, 
we compute $\gamma_{1}^{*}\gamma_{3}$ 
as $\gamma_{1}^{*}\gamma_{3}=e^{i(\theta_{2}+\theta_{3}+\pi)}|u_{1}|e^{-i\theta_{1}}
=\, -e^{i\pi}|u_{1}|e^{i\theta_{4}}=u_{4}$. 
Hence it follows that $U\in S\mathbb{H}$. 

In the case $u_{3}=0$, we define $\gamma_{1}, \gamma_{2}$, 
and $\gamma_{3}$ by 
$\gamma_{1}:=e^{-i(\theta_{1}+\theta_{4})/2}u_{1}$, 
$\gamma_{2}:=0$, 
and $\gamma_{3}:=e^{i(\theta_{1}+\theta_{4})/2}$, 
and then, we have 
$\gamma_{1}\gamma_{3}=u_{1}$, 
$\gamma_{2}\gamma_{3}=0=u_{3}$. 
Since $u_{2}=0$ by (\ref{eq:U(2)-4}), 
we have $-\gamma_{2}^{*}=0=u_{2}$. 
Finally, (\ref{eq:U(2)-5}) brings us to 
the computation, 
$\gamma_{1}^{*}\gamma_{3}=
e^{i(\theta_{1}+\theta_{4})}|u_{1}|e^{-i\theta_{1}}
=e^{i\theta_{4}}|u_{1}|=u_{4}$. 
Consequently, we obtain $U\in S\mathbb{H}$. 
\end{proof}

Proposition \ref{prop:von-Neumann} based on von Neumann's 
theory says that all the self-adjoint extensions of 
the minimal Schr\"{o}dinger operator $H_{0}$ are 
parameterized by $U\in U(2)$. 
Our assertion is that there are only two types of 
boundary conditions for all the self-adjoint extensions of $H_{0}$. 
They are represented by (\ref{schrodinger:*rho}) 
and  (\ref{schrodinger:*alpha}). 
Therefore, the only thing we have to do for 
the classification of the boundary conditions 
is actually to construct the boundary condition 
(\ref{schrodinger:*rho}) or (\ref{schrodinger:*alpha}) 
from every $U\in U(2)$. 
In addition, Proposition \ref{prop:von-Neumann} says 
that a diagonal $U\in U(2)$ leaves the set of the left-island functions 
(resp. the right-island functions) invariant, 
on the other hand, a non-diagonal $U\in U(2)$ 
exchanges the left-island functions and the right-island functions. 
Theorem \ref{theo:main2} shows that this situation is reflected 
in the boundary conditions.   

\begin{theorem}
\label{theo:main2}
For every self-adjoint extension $H_{U}$, $U\in U(2)$, 
of the minimal Schr\"{o}dinger operator $H_{0}$, 
every boundary condition of the wave functions in $D(H_{U})$ 
is constructed as in (a) or (b).
\begin{enumerate}
\item[(a)] Let $U \in U(2)$ be diagonal. Then, 
$U$ is represented as 
$$
U=
\begin{pmatrix}
\gamma_{L} & 0 \\ 
0 & \gamma_{R}
\end{pmatrix}
\,\,\, \text{with}\,\,\, 
|\gamma_{L}|=|\gamma_{R}|=1. 
$$
The one-to-one correspondence between 
the self-adjoint extensions $H_{U}$ 
parameterized by diagonal matrices $U\in U(2)$ and 
the self-adjoint extensions $H_{\rho}$ 
parameterized by vectors $\rho\in\overline{\mathbb{R}}^{2}$ 
such that $D(H_{U})=D(H_{\rho})$ 
is given in the following:
\begin{enumerate}
\item[(L1)] For every $\gamma_{L} \neq -e^{i\sqrt{2}\, \Lambda}$, 
set $\theta_{L}$ as $\theta_{L}:=\arg\gamma_{L}$. 
Then, the component $\rho_{-}\in\mathbb{R}$ is given by 
$$
\rho_{-}=\, 
-\, \frac{1}{\sqrt{2}}
\left\{\tan\left(
\frac{\theta_{L}}{2}
-\,\frac{\Lambda}{\sqrt{2}}\right)-1
\right\}. 
$$
Conversely, for every $\rho_{-}\in\mathbb{R}$, 
the argument $\theta_{L}$ is determined by 
$$
\theta_{L}=2\arctan(-\sqrt{2}\rho_{-}+1)+\sqrt{2}\Lambda. 
$$
\item[(L2)] The components $\gamma_{L}=\, -e^{i\sqrt{2}\, \Lambda}$ 
and $\rho_{-}=\infty$ correspond to each other. 
\item[(R1)] For every $\gamma_{R}\neq -e^{i\sqrt{2}\, \Lambda}$, 
set $\theta_{R}$ as $\theta_{R}:=\arg\gamma_{R}$. 
Then, the component $\rho_{+}\in\mathbb{R}$ is given by 
$$
\rho_{+}=\frac{1}{\sqrt{2}}
\left\{\tan\left(
\frac{\theta_{R}}{2}
-\, \frac{\Lambda}{\sqrt{2}}\right)-1
\right\}. 
$$
Conversely, for every  $\rho_{+}\in\mathbb{R}$, 
the argument $\theta_{R}$ is determined by 
$$
\theta_{R}=2\arctan(\sqrt{2}\rho_{+}+1)+\sqrt{2}\Lambda. 
$$
\item[(R2)] The components $\gamma_{R}=\, -e^{i\sqrt{2}\, \Lambda}$ and 
$\rho_{+}=\infty$ correspond to each other.
\end{enumerate}
\item[(b)] Let $U \in U(2)$ be non-diagonal. 
Then, $U$ is represented as 
$$
U = \gamma_{3}
\begin{pmatrix}
\gamma_{1} & -\gamma_{2}^{*} \\ 
\gamma_{2} & \gamma_{1}^{*}
\end{pmatrix}
\,\,\, \text{with}\,\,\, 
|\gamma_{1}|^{2}+|\gamma_{2}|^{2}
=|\gamma_{3}|=1\,\,\, \text{and}\,\,\, 
\gamma_{2}\neq 0.  
$$
The one-to-one correspondence between 
the self-adjoint extensions $H_{U}$ 
parameterized by non-diagonal matrices $U\in U(2)$ 
and the self-adjoint extensions $H_{\alpha}$ 
parameterized by vectors $\alpha\in\mathbb{C}^{4}$ 
in the class (Class $\alpha$) 
such that $D(H_{U})=D(H_{\alpha})$ 
is given in the following:
For every triple of components 
$\gamma_{1}$, $\gamma_{2}$, and $\gamma_{3}$, 
the vector $\alpha\in\mathbb{C}^{4}$ 
in the class (Class $\alpha$) is given by  
\begin{equation}
\begin{cases}
\alpha_{1}=
i\, \sqrt{2}\,\gamma_{2}^{-1}
\left( 
\Re(e^{i\pi/4}\gamma_{1})+\Re(e^{i(\pi/4-\sqrt{2}\, \Lambda)}\gamma_{3})
\right), \\ 
\alpha_{2}=\, 
-i\, \sqrt{2}\,\gamma_{2}^{-1}
\left( 
\Re\gamma_{1}+\Re(e^{-i\sqrt{2}\, \Lambda}\gamma_{3})
\right), \\ 
\alpha_{3}=\, 
-i\, \sqrt{2}\,\gamma_{2}^{-1}
\left( 
\Re\gamma_{1}+\Re(e^{i(\pi/2-\sqrt{2}\, \Lambda)}\gamma_{3})
\right),\\ 
\alpha_{4}=
i\, \sqrt{2}\,\gamma_{2}^{-1}
\left( 
\Re(e^{-i\pi/4}\gamma_{1})+\Re(e^{i(\pi/4-\sqrt{2}\, \Lambda)}\gamma_{3})
\right).
\end{cases}
\tag{TJF-$B$}
\label{eq:TJF-B}
\end{equation}
Conversely, for every $\alpha \in\mathbb{C}^{4}$ 
satisfying (Class $\alpha$), 
the triple of the components 
$\gamma_{1}$, $\gamma_{2}$, and $\gamma_{3}$ is determined by  
\begin{equation}
\begin{cases}
\gamma_{1}=
\Gamma_{0}e^{-i\theta}
\left(
e^{i\pi/4}\alpha_{1}+\alpha_{2}+\alpha_{3}+e^{-i\pi/4}\alpha_{4}
\right), \\ 
\gamma_{2}=\, -i\sqrt{2}\Gamma_{0}e^{-i\theta}, \\ 
\gamma_{3}=\, -\Gamma_{0}e^{i(\sqrt{2}\, \Lambda-\theta)}
\left( e^{-i\pi/4}\alpha_{1}-i\alpha_{2}+\alpha_{3}+e^{-i\pi/4}\alpha_{4}
\right), \\ 
\Gamma_{0}=
\left\{
\Bigl|
e^{i\pi/4}\alpha_{1}+\alpha_{2}+\alpha_{3}+e^{-i\pi/4}\alpha_{4}
\Bigr|^{2}+2
\right\}^{1/2}.
\end{cases}
\tag{TJF-$U$}
\label{eq:TJF-U}
\end{equation}
\end{enumerate}
\end{theorem}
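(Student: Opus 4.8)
The plan is to push everything through the von Neumann representation of Proposition~\ref{prop:von-Neumann}. Every $\psi\in D(H_U)$ has the form $\psi=\psi_{0}+\psi^{+}+U\psi^{+}$ with $\psi_{0}\in D(H_{0})$, so $\psi_{0}$ and $\psi_{0}{\,}'$ vanish at $\pm\Lambda$ and the four boundary data $\psi(\pm\Lambda)$, $\psi{\,}'(\pm\Lambda)$ are controlled entirely by $\psi^{+}=c_{1}L_{+}+c_{2}R_{+}$ and $U\psi^{+}$, which is read off from~(\ref{eq:tr-prob}). Because $L_{\pm}$ vanish on the right island and $R_{\pm}$ on the left, the pair $(\psi(-\Lambda),\psi{\,}'(-\Lambda))$ depends only on the $L$-components and $(\psi(+\Lambda),\psi{\,}'(+\Lambda))$ only on the $R$-components. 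First I would use the derivative relations~(\ref{eq:rel1}) together with the boundary values~(\ref{eq:rel2}) to write each of the four data as an explicit linear function of $(c_{1},c_{2})$, abbreviating $A:=L_{+}(-\Lambda)=R_{+}(+\Lambda)$ and using $A^{*}=A\,e^{-i\sqrt{2}\,\Lambda}$ to keep only the phase $e^{-i\sqrt{2}\,\Lambda}$ and the constants $\tfrac{1\mp i}{\sqrt{2}}=e^{\mp i\pi/4}$, $\tfrac{-1\pm i}{\sqrt{2}}=e^{\pm i3\pi/4}$.

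For part~(a) the diagonal hypothesis $u_{12}=u_{21}=0$ decouples the two islands: the left data depend only on $c_{1}$ and the right data only on $c_{2}$. Forming the ratio $\rho_{-}=\psi{\,}'(-\Lambda)/\psi(-\Lambda)$, substituting $\gamma_{L}=e^{i\theta_{L}}$, and multiplying numerator and denominator by the half-angle phase $e^{-i(\theta_{L}-\sqrt{2}\,\Lambda)/2}$ collapses the expression into a ratio of cosines that simplifies to the stated $\rho_{-}=-\tfrac{1}{\sqrt{2}}\{\tan(\theta_{L}/2-\Lambda/\sqrt{2})-1\}$; the $R$-side computation is identical but with the phases $e^{\pm i3\pi/4}$ and yields $\rho_{+}$. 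The exceptional value $\gamma_{L}=-e^{i\sqrt{2}\,\Lambda}$ is exactly where the denominator $1+\gamma_{L}e^{-i\sqrt{2}\,\Lambda}$ vanishes, i.e.\ $\psi(-\Lambda)=0$, which is the case $\rho_{-}=\infty$, giving (L2) (and similarly (R2)). The converse $\arctan$ formulas follow from the strict monotonicity of $\tan$ on the relevant half-period, which also gives the bijectivity asserted in (L1) and (R1). That $D(H_{U})=D(H_{\rho})$ then holds because $D(H_{U})\subseteq D(H_{\rho})$ by the ratio computation and a self-adjoint operator admits no proper self-adjoint extension.

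For part~(b) I would first invoke Lemma~\ref{lem:U(2)} to write $U=\gamma_{3}\left(\begin{smallmatrix}\gamma_{1}&-\gamma_{2}^{*}\\\gamma_{2}&\gamma_{1}^{*}\end{smallmatrix}\right)$, the non-diagonal hypothesis being equivalent to $\gamma_{2}\neq0$. Assembling the linear expressions above into $2\times2$ matrices $M_{L},M_{R}$ with $(\psi(-\Lambda),\psi{\,}'(-\Lambda))^{\mathrm{t}}=M_{L}(c_{1},c_{2})^{\mathrm{t}}$ and $(\psi(+\Lambda),\psi{\,}'(+\Lambda))^{\mathrm{t}}=M_{R}(c_{1},c_{2})^{\mathrm{t}}$, the boundary matrix is $B_{\alpha}=M_{R}M_{L}^{-1}$. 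A short computation gives $\det M_{L}=i\sqrt{2}\,A^{2}e^{-i\sqrt{2}\,\Lambda}u_{21}$, which is nonzero precisely because $u_{21}=\gamma_{3}\gamma_{2}\neq0$; this both makes the transfer matrix well defined and explains why the diagonal case is genuinely separate. Eliminating $(c_{1},c_{2})$ and reorganizing the entries---using $|\gamma_{1}|^{2}+|\gamma_{2}|^{2}=1$ and $|\gamma_{3}|=1$ to convert the cross terms into the real-part combinations of $\gamma_{1}$ and $\gamma_{3}$ weighted by $e^{i\pi/4}$ and $e^{-i\sqrt{2}\,\Lambda}$---is expected to produce exactly~(\ref{eq:TJF-B}). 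The membership $\alpha\in$~(Class $\alpha$), i.e.\ conditions~(\ref{eq:lem-3-1})--(\ref{eq:lem-3-2}), is then checked directly from the unitarity relations on $(\gamma_{1},\gamma_{2},\gamma_{3})$ (equivalently, from $B_{\alpha}^{*}JB_{\alpha}=J$ with $J=\left(\begin{smallmatrix}0&1\\-1&0\end{smallmatrix}\right)$, which is inherited from $U\in U(2)$). As in part~(a), $D(H_{U})=D(H_{\alpha})$ follows from the inclusion built into $B_{\alpha}$ together with maximality of self-adjoint operators. Finally the inverse formula~(\ref{eq:TJF-U}) is obtained by solving the four relations for $\gamma_{1},\gamma_{2},\gamma_{3}$ (with the phase $\theta$ extracted in Proposition~\ref{prop:phase}), the normalization $|\gamma_{1}|^{2}+|\gamma_{2}|^{2}=1$ fixing the scalar $\Gamma_{0}$.

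I expect the main obstacle to be the bookkeeping in part~(b): carrying the phases $e^{\pm i\pi/4}$, $e^{\pm i3\pi/4}$ and $e^{-i\sqrt{2}\,\Lambda}$ through the product $M_{R}M_{L}^{-1}$ and recognizing that the resulting entries collapse into the real-part expressions of~(\ref{eq:TJF-B}). Part~(a) is essentially routine once the ratios are formed; the only delicate points there are the correct identification of $\gamma_{L},\gamma_{R}=-e^{i\sqrt{2}\,\Lambda}$ with $\rho_{\pm}=\infty$ and checking that the parameter maps are genuine bijections rather than merely well defined.
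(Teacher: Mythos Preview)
Your proposal is correct and follows essentially the same route as the paper: von Neumann's decomposition, explicit computation of the boundary data via (\ref{eq:rel1})--(\ref{eq:rel2}), assembly into $2\times2$ matrices (your $M_{L},M_{R}$ are the paper's $A_{-},A_{+}$), inversion using $\det M_{L}\propto\gamma_{2}\neq0$, and direct verification of (Class~$\alpha$). The two minor deviations---you sketch part~(a) where the paper simply cites~\cite{FHNS10}, and you close $D(H_{U})=D(H_{\alpha})$ by maximality of self-adjoint operators where the paper, in the converse direction, instead verifies explicitly a system of four scalar equations in $\gamma_{1},\gamma_{2},\gamma_{3}$---are both sound and arguably cleaner.
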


Theorems \ref{theo:main1} and \ref{theo:main2} 
show us the following correspondence: Every diagonal 
$U\in U(2)$ corresponds to the boundary condition 
(\ref{schrodinger:*rho}), 
and every non-diagonal $U\in U(2)$ 
to the boundary condition 
(\ref{schrodinger:*alpha}).
In addition, Theorem \ref{theo:main2} says 
that there is no boundary condition 
which makes a self-adjoint extension but 
the conditions (\ref{schrodinger:*rho}) 
and (\ref{schrodinger:*alpha}). 

By Proposition \ref{prop:von-Neumann}, 
the wave functions belonging to the domain 
of every self-adjoint extension $H_{U}$ are represented  
as in (\ref{eq:von-Neumann-representation}). 
Since $\psi_{0}(\pm\Lambda)=0$ 
for $\psi_{0}\in D(H_{0})
=AC_{0}^{2}(\overline{\Omega_{\Lambda}})$, 
the unitary operator $U\in U(2)$ gives us the information 
about how the electron reflects at the boundary and 
how it passes through the junction. 
According to (\ref{eq:tr-prob}), 
the unitary operator U maps the eigenfunction 
$L_{+}$ living in the left island 
(resp. $R_{+}$ living in the right island) 
to the eigenfunction $L_{-}$ (resp. $R_{-}$) 
staying in the same island with the probability 
$|u_{11}|^{2}$ (resp. $|u_{22}|^{2}$) and the eigenfunction 
$R_{-}$ (resp. $L_{-}$) 
coming from the opposite island with the probability 
$|u_{12}|^{2}$ (resp. $|u_{21}|^{2}$). 

Theorems \ref{theo:main1} and \ref{theo:main2} show 
how the information from $U\in U(2)$ 
reflects in the boundary conditions. 
The boundary condition (\ref{schrodinger:*rho}) 
shows the solitariness:
$$
(\text{left island})\qquad  
\begin{cases}
\rho_-\psi(-\Lambda)
=\psi'(-\Lambda) 
&\,\,\, \text{if\,\,\, $\rho_{-} \in \mathbb{R}$}, \\
\psi(-\Lambda)=0
&\,\,\, \text{if\,\,\, $\rho_{-}=\infty$},
\end{cases}
$$ 
and 
$$ 
(\text{right island})\qquad  
\begin{cases}
\rho_{+}\psi(+\Lambda)
=\psi'(+\Lambda) 
&\,\,\, \text{if\,\,\, $\rho_{+} \in \mathbb{R}$}, \\
\psi(+\Lambda)=0 
&\,\,\, \text{if\,\,\, $\rho_{+}=\infty$}.  
\end{cases} 
$$
Both of the boundary conditions in the left island 
and the right one are independent of each other, 
which makes no interchange between the information 
of the individual wave functions living in 
the left island and right one. 
In addition, no extra phase factor $\theta$ appears 
with the form $e^{i\theta}$ in this boundary condition then. 
On the other hand, the boundary condition (\ref{schrodinger:*alpha}) 
shows how the individual wave functions living in each island 
make interchange between each other at the boundary. 
Proposition \ref{prop:phase} shows 
how a phase factor appears in the boundary condition: 
\begin{equation}
\begin{pmatrix}
\psi(+\Lambda) \\ 
\psi'(+\Lambda) \\ 
\end{pmatrix} 
=
e^{i\theta}
\begin{pmatrix}
a_{1}\psi(-\Lambda)
+a_{2}\psi'(-\Lambda) \\ 
a_{3}\psi(-\Lambda)
+a_{4}\psi'(-\Lambda) 
\end{pmatrix}
\tag{TJF}
\label{eq:TJF}
\end{equation}
for some $a_{j}\in\mathbb{R}$, 
$j= 1, \cdots, 4$, with 
$a_{1}a_{4}-a_{2}a_{3}=1$. 
(\ref{eq:TJF}) with (\ref{eq:TJF-B}) and (\ref{eq:TJF-U}) 
is our \textit{tunnel-junction formula} 
for the Schr\"{o}dinger particle. 

Using this tunnel-junction formula, 
we try to consider a possibility of a quantum device. 
That is, let us now consider a unit of a quantum device, 
consisting of a junction and two quantum wires 
as in Fig.~\ref{fig:unit}.  
\begin{figure}[htbp]
\begin{center}
  \includegraphics[width=65mm]{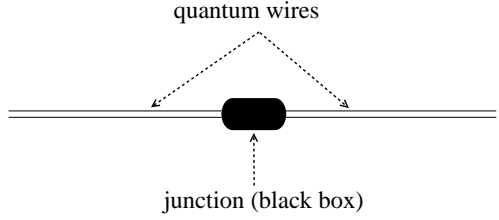}
  \caption{\scriptsize 
The unit of our quantum device consists of 
the two quantum wires and the junction as the black box.}
  \label{fig:unit}
\end{center}
\end{figure}
In this unit, the junction is for controlling 
the phase factor, and the wires for transporting 
the Schr\"{o}dinger particle. 
In our argument the two wires were 
respectively represented by the left island $\Omega_{\Lambda,L}$ 
and the right island $\Omega_{\Lambda,R}$, 
and the junction by the segment $[-\Lambda , +\Lambda]$.  
We regarded the junction as a black box 
to give mathematical, physical arbitrariness to the junction. 
As seen above, the self-adjointness of the Hamiltonian 
of the Schr\"{o}dinger particle living in the two wires 
$\Omega_{\Lambda}\equiv\Omega_{\Lambda,L}\cup\Omega_{\Lambda,R}$ 
is mathematically determined by a boundary condition 
of the wave functions on which every self-adjoint 
extension acts. 
In real physics, actually, the boundary condition 
is uniquely determined by the quality and the shape 
of the boundary of a material of the wires. 
Thus, the wave functions have to satisfy 
the unit's own specific boundary condition 
to become the residents of the unit,
otherwise the unit ejects them. 
We then consider the combination of different two units 
as in Fig.\ref{fig:units}. 
We set Unit$0$ with the boundary condition BC$0$ and 
Unit$1$ with the boundary condition BC$1$. 
We connect the two junctions with each other by 
a phase-controller. 
\begin{figure}[htbp]
\begin{center}
  \includegraphics[width=65mm]{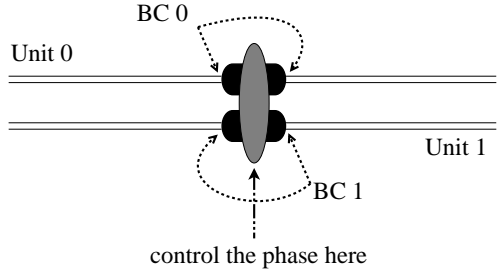}
  \caption{\rm{\scriptsize 
The upper is Unit$0$ with the boundary condition BC$0$, 
and the lower is Unit$1$ with the boundary condition BC$1$. 
The ellipsoid body is a phase-controller.}}
  \label{fig:units}
\end{center}
\end{figure}
We employ $\theta_{0}$ for the phase factor in BC$0$ and 
$\theta_{1}$ ($\ne\theta_{0}$) for the phase factor in BC$1$, respectively. 
If we succeed in the phase-shift gate operation 
from $\theta_{0}$ to $\theta_{1}$ at the phase-controller, 
we can make a phase-based switching device for electron 
as the Schr\"{o}dinger particle. 
This suggests that we could 
use Unit$0$ and Unit$1$ as a qubit 
in the same way as the flying qubit \cite{UniTokyo12} 
which is realized by the presence of an electron 
in either channel of the wires of an Aharonov-Bohm ring. 
Thus, our mathematical idea of the phase-based switching device 
may propose a possibility of the control over the transport of 
Schr\"{o}dinger particle so that 
the transport through either channel of Unit$0$ and Unit$1$ 
makes another flying qubit by using phase factor.

\section{Proof of Theorems}

\subsection{Proof of Theorem \ref{theo:main1}}
\label{subsec:theo-main1}

We here introduce the brief outline 
of the proof of Theorem \ref{theo:main1} 
according to Theorem 1(ii) and Theorem 2(i). 
We will use the following lemma: 
\begin{lemma}
\label{lem:2}
\begin{enumerate}
\item[i)] Let $a_{+}, a_{-}$ be arbitrary complex numbers.  
\begin{enumerate}
\item[(i-1)] For every $\rho=(\rho_{+},\rho_{-})$ 
with $|\rho_{\pm}|<\infty$, 
there is a wave function $\psi_{\rho}\in D(H_{\rho})$ 
such that $\psi_{\rho}(\pm\Lambda)=a_{\pm}^{*}$. 
\item[(i-2)] For every $\rho=(\rho_{+},\rho_{-})$ 
with $|\rho_{+}|<\infty$ and $\rho_{-}=+\infty$, 
there is a wave function $\psi_{\rho}\in D(H_{\rho})$ 
such that $\psi_{\rho}(+\Lambda)=a_{+}^{*}$ and 
$\psi_{\rho}'(-\Lambda)=a_{-}^{*}$. 
\item[(i-3)] For every $\rho=(\rho_{+},\rho_{-})$ 
with $\rho_{+}=+\infty$ and $|\rho_{-}|<\infty$, 
there is a wave function $\psi_{\rho}\in D(H_{\rho})$ 
such that $\psi_{\rho}(-\Lambda)=a_{-}^{*}$ and 
$\psi_{\rho}'(+\Lambda)=a_{+}^{*}$. 
\item[(i-4)] For every $\rho=(\rho_{+},\rho_{-})$ 
with $\rho_{\pm}=+\infty$, 
there is a wave function $\psi_{\rho}\in D(H_{\rho})$ 
such that $\psi_{\rho}'(\pm\Lambda)=a_{\pm}^{*}$. 
\end{enumerate}  
\item[ii)] For arbitrary complex numbers $a_{1}, a_{2}$, 
there is a wave function $\psi_{\alpha}\in D(H_{\alpha})$ 
such that $\psi_{\alpha}(-\Lambda)=a_{1}^{*}$ 
and $\psi_{\alpha}'(-\Lambda)~a_{2}^{*}$. 
\end{enumerate}
\end{lemma}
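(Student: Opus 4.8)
The plan is to reduce both parts of the lemma to a single elementary \emph{realization} statement: for any prescribed quadruple $(c_-,d_-,c_+,d_+)\in\mathbb{C}^4$ there is a function $\psi\in D(H_0^*)=AC^2(\overline{\Omega_\Lambda})$ with $\psi(-\Lambda)=c_-$, $\psi'(-\Lambda)=d_-$, $\psi(+\Lambda)=c_+$, and $\psi'(+\Lambda)=d_+$. Once this is in hand, every assertion of the lemma follows by choosing the quadruple so that, first, the prescribed interior values $a_\pm$ (resp. $a_1,a_2$) are met, and second, the remaining boundary data are \emph{forced} to obey the relevant boundary condition, so that the resulting $\psi$ actually lands in the restricted domain $D(H_\rho)$ (resp. $D(H_\alpha)$), not merely in $D(H_0^*)$.

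For the realization I would work island by island, since a function in $AC^2(\overline{\Omega_\Lambda})$ is nothing but a pair of functions, one on $\Omega_{\Lambda,L}=(-\infty,-\Lambda)$ and one on $\Omega_{\Lambda,R}=(+\Lambda,+\infty)$, with no coupling between them. On the left island I would take
$$
\psi(x):=\bigl(c_-+(d_--c_-)(x+\Lambda)\bigr)e^{\,x+\Lambda},\qquad x\le-\Lambda,
$$
and on the right island
$$
\psi(x):=\bigl(c_++(d_++c_+)(x-\Lambda)\bigr)e^{-(x-\Lambda)},\qquad x\ge+\Lambda.
$$
Each piece is smooth, and the polynomial-times-decaying-exponential form guarantees $\psi,\psi',\psi''\in L^2$ together with absolute continuity up to the respective finite endpoint, so $\psi\in D(H_0^*)$; a one-line differentiation at $x=\mp\Lambda$ confirms the four boundary values. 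The decay to $0$ as $x\to\pm\infty$ is automatic and imposes no further constraint.

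It then remains only to specialize the quadruple in each case. For (i-1), with $|\rho_\pm|<\infty$, I would set $c_\pm=a_\pm^*$ and $d_\pm=\rho_\pm a_\pm^*$, so that (\ref{schrodinger:*rho}) reads $\psi'(\pm\Lambda)=\rho_\pm\psi(\pm\Lambda)$ and holds by construction. Cases (i-2)--(i-4) are handled identically: whenever $\rho_\pm=+\infty$ one replaces the constraint $\rho_\pm\psi(\pm\Lambda)=\psi'(\pm\Lambda)$ by the Dirichlet constraint $\psi(\pm\Lambda)=0$ (i.e. $c_\pm=0$) and leaves the corresponding derivative parameter $d_\pm$ free to equal $a_\pm^*$. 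For part (ii) I would set $c_-=a_1^*$, $d_-=a_2^*$, and define the right-hand data by
$$
\begin{pmatrix}c_+\\ d_+\end{pmatrix}:=B_\alpha\begin{pmatrix}a_1^*\\ a_2^*\end{pmatrix},
$$
so that (\ref{schrodinger:*alpha}) is satisfied automatically while the prescribed values $\psi(-\Lambda)=a_1^*$, $\psi'(-\Lambda)=a_2^*$ hold.

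I expect no serious obstacle here: the entire content sits in the explicit construction above, and the remaining case analysis is pure bookkeeping of which boundary datum is prescribed and which is forced. The only points deserving a moment of care are verifying the $L^2$ and absolute-continuity conditions needed for membership in $D(H_0^*)$ (secured by the exponential factors), and checking, in the $\rho_\pm=+\infty$ cases, that imposing $\psi(\pm\Lambda)=0$ still leaves $\psi'(\pm\Lambda)$ completely free — which is immediate, since $c_\pm$ and $d_\pm$ enter the ansatz as independent parameters.
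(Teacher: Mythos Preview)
Your proposal is correct and follows essentially the same strategy as the paper: explicitly construct, island by island, a function in $D(H_0^*)$ realizing any prescribed boundary data, then specialize so that the relevant boundary condition (BC~$\rho$) or (BC~$\alpha$) is met. The only difference is in the choice of building blocks and in organization. The paper fixes auxiliary functions $f,g\in C_0^\infty(\mathbb{R})$ with prescribed values and derivatives at $0$ and writes down a separate formula for each of (i-1)--(i-4) and for (ii), using rescalings like $f(\rho_\pm(x\mp\Lambda))$; you instead isolate a single realization statement for an arbitrary quadruple $(c_-,d_-,c_+,d_+)$ via the polynomial-times-exponential ansatz and then specialize. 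Your version is slightly cleaner (one lemma, uniform in $\rho$) and sidesteps a minor nuisance in the paper's construction, where the scaled bump $f(\rho_\pm(x\mp\Lambda))$ degenerates to a constant when $\rho_\pm=0$; your exponential factors make the $L^2$ membership transparent in all cases.
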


\begin{proof} We denote by $\chi_{L}(x)$ and $\chi_{R}(x)$ 
the characteristic functions on 
$\overline{\Omega_{\Lambda,L}}
=\left(\left. -\infty , -\Lambda\right]\right.$ 
and $\overline{\Omega_{\Lambda,R}}
=\left.\left[ +\Lambda , +\infty\right.\right)$, respectively. 

(i) Let $f, g$ be functions in $C_{0}^{\infty}(\mathbb{R})$ 
so that $f(0)\ne 0$, $f'(0)/f(0)=1$, 
and $g(0)=0$, $g'(0)\ne 0$. 
We define the function $\psi_{\rho}$ by 
$$
\psi_{\rho}(x)
:=\frac{a_{-}^{*}}{f(0)}f(\rho_{-}(x+\Lambda))\chi_{L}(x)
+\frac{a_{+}^{*}}{f(0)}f(\rho_{+}(x-\Lambda))\chi_{R}(x),\quad 
x \in \Omega_{\Lambda}, 
$$
for part (i-1), 
$$
\psi_{\rho}(x)
:=\frac{a_{-}^{*}}{g'(0)}g(x+\Lambda)\chi_{L}(x)
+\frac{a_{+}^{*}}{f(0)}f(\rho_{+}(x-\Lambda))\chi_{R}(x),\quad 
x \in \Omega_{\Lambda}, 
$$
for part (i-2), 
$$
\psi_{\rho}(x)
:=\frac{a_{-}^{*}}{f(0)}f(\rho_{-}(x+\Lambda))\chi_{L}(x)
+\frac{a_{+}^{*}}{g'(0)}g(x-\Lambda)\chi_{R}(x),\quad 
x \in \Omega_{\Lambda}, 
$$
for part (i-3), 
$$
\psi_{\rho}(x)
:=\frac{a_{-}^{*}}{g'(0)}g(x+\Lambda)\chi_{L}(x)
+\frac{a_{+}^{*}}{g'(0)}g(x-\Lambda)\chi_{R}(x),\quad 
x \in \Omega_{\Lambda}, 
$$
for part (i-4).  
Then, we obtain our desired function $\psi_{\rho}(x)$. 

(ii) Let $f, g$ be functions in $C_{0}^{\infty}(\mathbb{R})$ 
so that $f(0)\ne 0$, $f'(0)=0$, 
and $g(0)=0$, $g'(0)\ne 0$. 
We define the function $\psi_{\rho}$ by 
\begin{align*}
\psi_{\alpha}(x):=&
\left\{ 
\frac{a_{1}^{*}}{f(0)}f(x+\Lambda)
+\frac{a_{2}^{*}}{g'(0)}g(x+\Lambda)
\right\}\chi_{R}(x) \\ 
&\qquad 
+ \left\{ 
\frac{b_{1}}{f(0)}f(x-\Lambda)
+\frac{b_{2}}{g'(0)}g(x-\Lambda)
\right\}\chi_{R}(x),
\end{align*} 
where 
$b_{1}=\alpha_{1}a_{1}^{*}+\alpha_{2}a_{2}^{*}$ 
and $b_{2}=\alpha_{3}a_{1}^{*}+\alpha_{4}a_{2}^{*}$.  
Then, the function $\psi_{\alpha}$ satisfies our desired condition. 
\end{proof}

It follows from integration by parts that 
\begin{align}
\langle H_{0}^{*}\psi|\phi\rangle_{L^{2}(\Omega_{\Lambda})}
-\langle\psi|H_{0}^{*}\phi\rangle_{L^{2}(\Omega_{\Lambda})} 
=& 
\psi'(+\Lambda)^{*}\phi(+\Lambda)-\psi(+\Lambda)^{*}\phi'(+\Lambda) 
\label{eq:int-by-parts} \\  
&-\psi'(-\Lambda)^{*}\phi(-\Lambda)+\psi(-\Lambda)^{*}\phi'(-\Lambda) 
\notag 
\end{align}
for every $\psi, \phi \in D(H_{0}^{*})$. 

Since $H_{\sharp}\subset H_{0}^{*}$ for $\sharp=\rho, \alpha$, 
it is easy to check $H_{\sharp}\subset H_{\sharp}^{*}$ 
using (\ref{eq:int-by-parts}).  
On the other hand, $H_{\sharp}\supset H_{\sharp}^{*}$, $\sharp=\rho, \alpha$, 
is proved as follows:  
By the general definition of adjoint operator, 
we have 
$0=\langle H_{\sharp}\psi_{\sharp}|\phi\rangle_{L^{2}(\Omega_{\Lambda})}
-\langle \psi_{\sharp}|H_{\sharp}^{*}\phi\rangle_{L^{2}(\Omega_{\Lambda})}$ 
for every $\phi\in D(H_{\sharp}^{*})$ and $\psi_{\sharp}\in D(H_{\sharp})$ 
given in Lemma \ref{lem:2}. 
The arbitrariness of $a_{+}$ and $a_{-}$, or $a_{1}$ and $a_{2}$ 
implies $\phi\in D(H_{\sharp})$, 
that is, $H_{\sharp}^{*}\subset H_{\sharp}$. 
This is our path to prove the self-adjointness of 
$H_{\sharp}$, i.e., $H_{\sharp}^{*}=H_{\sharp}$, 
employed in \cite{FHNS10}.

For instance, we here show the proof of Theorem \ref{theo:main1} (b) only. 
Since $H_{\alpha}\subset H_{0}^{*}$, 
we have 
\begin{align*}
& \langle H_{\alpha}\psi|\phi\rangle_{L^{2}(\Omega_{\Lambda})}
-\langle \psi|H_{\alpha}\phi\rangle_{L^{2}(\Omega_{\Lambda})} \\ 
=&
\left(\alpha_{3}\psi(-\Lambda)+\alpha_{4}\psi'(-\Lambda)\right)^{*}
\left(\alpha_{1}\phi(-\Lambda)+\alpha_{2}\phi'(-\Lambda)\right) \\ 
&-\left(\alpha_{1}\psi(-\Lambda)+\alpha_{2}\psi'(-\Lambda)\right)^{*}
\left(\alpha_{3}\phi(-\Lambda)+\alpha_{4}\phi'(-\Lambda)\right) \\ 
&-\psi'(-\Lambda)^{*}\phi(-\Lambda)
+\psi(-\Lambda)^{*}\phi'(-\Lambda) \\ 
=& 
(\alpha_{1}\alpha_{3}^{*}-\alpha_{1}^{*}\alpha_{3})
\psi(-\Lambda)^{*}\phi(-\Lambda)
+(\alpha_{1}\alpha_{4}^{*}-\alpha_{2}^{*}\alpha_{3}-1)
\psi'(-\Lambda)^{*}\phi(-\Lambda) \\ 
&-(\alpha_{1}^{*}\alpha_{4}-\alpha_{2}\alpha_{3}^{*}-1)
\psi(-\Lambda)^{*}\phi'(-\Lambda) 
+(\alpha_{2}\alpha_{4}^{*}-\alpha_{2}^{*}\alpha_{4})
\psi'(-\Lambda)^{*}\phi'(-\Lambda) \\ 
=&0
\end{align*} 
for every $\psi, \phi \in D(H_{\alpha})$ 
by using (\ref{eq:int-by-parts}), 
(\ref{schrodinger:*alpha}) and conditions of (Class $\alpha$). 
Hence it follows from this that $H_{\alpha}$ is symmetric, 
i,e, $H_{\alpha}\subset H_{\alpha}^{*}$.

Conversely, using the fact that 
$H_{\alpha}\subset H_{\alpha}^{*}\subset H_{0}^{*}$ 
along with the help of the general definition of adjoint operator 
and (\ref{eq:int-by-parts}), 
for every $\phi\in D(H_{\alpha}^{*})$ and 
$\psi_{\alpha}\in D(H_{\alpha})$ given in Lemma \ref{lem:2}(ii) 
we have 
\begin{align*}
0=&\langle H_{\alpha}\psi_{\alpha}|\phi\rangle_{L^{2}(\Omega_{\Lambda})}
-\langle \psi_{\alpha}|H_{\alpha}^{*}\phi\rangle_{L^{2}(\Omega_{\Lambda})} \\ 
=& 
\left(
\alpha_{3}\psi_{\alpha}(-\Lambda)+\alpha_{4}\psi_{\alpha}'(-\Lambda)
\right)^{*}\phi(+\Lambda)
- \left(
\alpha_{1}\psi_{\alpha}(-\Lambda)+\alpha_{2}\psi_{\alpha}'(-\Lambda)
\right)^{*}\phi'(+\Lambda) \\ 
&-\psi'(-\Lambda)^{*}\phi(-\Lambda)
+\psi(-\Lambda)^{*}\phi'(-\Lambda) \\ 
=& 
a_{1}\left(
\alpha_{3}^{*}\phi(+\Lambda)-\alpha_{1}^{*}\phi'(+\Lambda)
+\phi'(-\Lambda)
\right)
+a_{2}\left(
\alpha_{4}^{*}\phi(+\Lambda)-\alpha_{2}^{*}\phi'(+\Lambda)
-\phi(-\Lambda)
\right).
\end{align*} 
So, the arbitrariness of $a_{1}$ and $a_{2}$ leads to 
$$
\begin{pmatrix}
\phi(-\Lambda) \\ 
\phi'(-\Lambda)
\end{pmatrix}
=
\begin{pmatrix}
\alpha_{4}^{*} & -\alpha_{2}^{*} \\ 
-\alpha_{3}^{*} & \alpha_{1}^{*}  
\end{pmatrix}
\begin{pmatrix}
\phi(+\Lambda) \\ 
\phi'(+\Lambda)
\end{pmatrix}.
$$
We here note that the conditions of 
(Class $\alpha$) leads to 
$$
B_{\alpha}^{-1}=\begin{pmatrix}
\alpha_{4}^{*} & -\alpha_{2}^{*} \\ 
-\alpha_{3}^{*} & \alpha_{1}^{*}  
\end{pmatrix}
$$
Thus, these two equations imply 
that $\phi\in D(H_{\alpha})$. 
That is, $H_{\alpha}^{*}\subset H_{\alpha}$. 
Therefore, we have proved that $H_{\alpha}^{*}=H_{\alpha}$.

\subsection{Proof of Theorem \ref{theo:main2}}

Part (a) follows from \cite[Theorem 1]{FHNS10}. 
So, we prove part (b) only. 

In this proof, we set $\eta:=e^{i\pi/4}$ 
for simplicity. 
Let us give an arbitrary non-diagonal $U\in U(2)$. 
We know that $U\in U(2)$ has the representation 
in Lemma \ref{lem:U(2)}. 

By (\ref{eq:von-Neumann-representation}) we can write 
$\psi\in D(H_{U})$ as 
$$
\psi=\psi_{0}+c_{L}L_{+}+c_{R}R_{+}
+c_{L}UL_{+}+c_{R}UR_{+}, 
$$ 
where $\psi\in D(H_{0})$, and 
$c_{L}$ and $c_{R}$ run over $\mathbb{C}$ arbitrarily. 
Using this representation, 
(\ref{eq:rel1}) and (\ref{eq:rel2}) we can compute 
$\psi(+\Lambda)$, $\psi'(+\Lambda)$, 
$\psi(-\Lambda)$, and $\psi'(-\Lambda)$ as: 
\begin{equation}
\begin{cases}
\psi(+\Lambda)=
-\gamma_{3}\gamma_{2}^{*}R_{+}(+\Lambda)^{*}c_{L}
+\left\{ R_{+}(+\Lambda)
+\gamma_{3}\gamma_{1}^{*}R_{+}(+\Lambda)^{*}\right\}c_{R}, \\ 
\psi'(+\Lambda)= 
e^{i\pi/4}\gamma_{3}\gamma_{2}^{*}R_{+}(+\Lambda)^{*}c_{L} \\ 
\qquad\qquad\qquad 
-\left\{ 
e^{-i\pi/4}R_{+}(+\Lambda)
+e^{i\pi/4}\gamma_{3}\gamma_{1}^{*}R_{+}(+\Lambda)^{*}
\right\}c_{R}, \\ 
\psi(-\Lambda)=
\left\{ 
R_{+}(+\Lambda)+\gamma_{3}\gamma_{1}R_{+}(+\Lambda)^{*}\right\}c_{L} 
+\gamma_{3}\gamma_{2}R_{+}(+\Lambda)^{*}c_{R}, \\ 
\psi'(-\Lambda)=
\left\{ e^{-i\pi/4}R_{+}(+\Lambda)
+e^{i\pi/4}\gamma_{3}\gamma_{1}R_{+}(+\Lambda)^{*}\right\}c_{L} \\ 
\qquad\qquad\qquad 
+e^{i\pi/4}\gamma_{3}\gamma_{2}R_{+}(+\Lambda)^{*}c_{R}.  
\end{cases}
\label{eq:representation-L-R}
\end{equation}
(\ref{eq:representation-L-R}) says that 
$$
\begin{pmatrix}
\psi(+\Lambda) \\ 
\psi'(+\Lambda)
\end{pmatrix}
=A_{+}
\begin{pmatrix}
c_{L} \\ 
c_{R}
\end{pmatrix}\quad 
\text{and}\quad 
\begin{pmatrix}
\psi(-\Lambda) \\ 
\psi'(-\Lambda)
\end{pmatrix}
=A_{-}
\begin{pmatrix}
c_{L} \\ 
c_{R}
\end{pmatrix},  
$$
where 
$$
A_{+}=
\begin{pmatrix}
-\gamma_{3}\gamma_{2}^{*}R_{+}(+\Lambda)^{*} & 
R_{+}(+\Lambda)+\gamma_{3}\gamma_{1}^{*}R_{+}(+\Lambda)^{*} \\ 
\eta\gamma_{3}\gamma_{2}^{*}R_{+}(+\Lambda)^{*} & 
-\left(\eta^{*}R_{+}(+\Lambda)
+\eta\gamma_{3}\gamma_{1}^{*}R_{+}(+\Lambda)^{*}\right)
\end{pmatrix}
$$
and 
$$
A_{-}=
\begin{pmatrix}
R_{+}(+\Lambda)+\gamma_{3}\gamma_{1}R_{+}(+\Lambda)^{*} & 
\gamma_{3}\gamma_{2}R_{+}(+\Lambda)^{*} \\ 
\eta^{*}R_{+}(+\Lambda)+\gamma_{3}\gamma_{1}R_{+}(+\Lambda)^{*} & 
\gamma_{3}\gamma_{2}R_{+}(+\Lambda)^{*}
\end{pmatrix}.
$$
Since $\det A_{-}=i\sqrt{2}|R_{+}(+\Lambda)|^{2}\gamma_{3}\gamma_{2}\ne 0$, 
we know $A_{-}^{-1}$ exists.  
Thus, our desired representation 
(\ref{eq:TJF-B}) of $\alpha_{1}, \alpha_{2}, \alpha_{3}$, 
and $\alpha_{4}$ by $\gamma_{1}, \gamma_{2}$, and $\gamma_{3}$ 
follows from the immediate computation of $B_{\alpha}=A_{+}A_{-}^{-1}$.  
Thus, every $\psi\in D(H_{U})$ satisfies the boundary condition 
(BC $\alpha$). 
What we have to show next is that the 
vector $\alpha\in\mathbb{C}^{4}$ given by (\ref{eq:TJF-B}) 
is in the class (Class $\alpha$). 
It is obvious that our $\alpha_{1}$, $\alpha_{2}$, $\alpha_{3}$, 
and $\alpha_{4}$ satisfy (\ref{eq:lem-3-2}). 
We can compute $\alpha_{1}\alpha_{4}^{*}-\alpha_{2}\alpha_{3}^{*}$ 
as follows: 
\begin{align*}
\alpha_{1}\alpha_{4}^{*}-\alpha_{2}\alpha_{3}^{*}  
=&
\frac{2}{|\gamma_{2}|}
\Bigl[
\left(
\Re(\eta\gamma_{1})+\Re(\eta e^{-i\sqrt{2}\,\Lambda}\gamma_{3})
\right)
\left(
\Re(\eta^{*}\gamma_{1})+\Re(\eta e^{-i\sqrt{2}\,\Lambda}\gamma_{3})
\right) \\ 
&\qquad  
-\left( 
\Re\gamma_{1}+\Re(e^{-i\sqrt{2}\,\Lambda}\gamma_{3})
\right)
\left( 
\Re\gamma_{1}+\Re(\eta^{2}e^{-i\sqrt{2}\,\Lambda}\gamma_{3})
\right)
\Bigr] \\ 
=&
\frac{1}{2|\gamma_{2}|}
\Bigl[
\left(
\eta\gamma_{1}+\eta^{*}\gamma_{1}^{*}
+\eta e^{-i\sqrt{2}\,\Lambda}\gamma_{3}+\eta^{*}e^{i\sqrt{2}\,\Lambda}\gamma_{3}^{*}
\right) \\
&\qquad\qquad\times
\left(
\eta^{*}\gamma_{1}+\eta\gamma_{1}^{*}
+\eta e^{-i\sqrt{2}\,\Lambda}\gamma_{3}
+\eta^{*}e^{i\sqrt{2}\,\Lambda}\gamma_{3}^{*}
\right) \\ 
&\qquad  
-\left( 
\gamma_{1}+\gamma_{1}^{*}
+e^{-i\sqrt{2}\,\Lambda}\gamma_{3}
+e^{i\sqrt{2}\,\Lambda}\gamma_{3}^{*}
\right) \\
&\qquad\qquad\times 
\left( 
\gamma_{1}+\gamma_{1}^{*}
+\eta^{2}e^{-i\sqrt{2}\,\Lambda}\gamma_{3}
+\eta^{*2}e^{i\sqrt{2}\,\Lambda}\gamma_{3}^{*}
\right)
\Bigr] \\ 
=& 
\frac{|\gamma_{3}|^{2}-|\gamma_{1}|^{2}}{|\gamma_{2}|^{2}}=1. 
\end{align*}
Thus, $\alpha_{1}$, $\alpha_{2}$, $\alpha_{3}$, 
and $\alpha_{4}$ given by (\ref{eq:TJF-B}) 
satisfy (\ref{eq:lem-3-1}), 
and the vector $\alpha=(\alpha_{1}, \alpha_{2}, 
\alpha_{3}, \alpha_{4})$ is in the class (Class $\alpha$). 
 
Therefore, we have constructed the boundary matrix 
$B_{\alpha}$ with the vector $\alpha\in\mathbb{C}^{4}$ 
in the class (Class $\alpha$) 
from every non-diagonal $U\in U(2)$.

Conversely, let $\alpha_{1}, \alpha_{2}, \alpha_{3}$, and 
$\alpha_{4}$ be arbitrary complex numbers in 
the class (Class $\alpha$). 
It immediately follows from the definition 
of $\gamma_{1}$, $\gamma_{2}$, and $\Gamma_{0}$ 
in (\ref{eq:TJF-U}) that 
$|\gamma_{1}|^{2}+|\gamma_{2}|^{2}=1$. 
Using this equation together with 
the conditions of (Class $\alpha$) 
and the representation given in 
Proposition \ref{prop:phase}, 
we have 
\begin{align*}
1=&|\gamma_{1}|^{2}+|\gamma_{2}|^{2}
=\Gamma_{0}^{2}
\left\{ 
\sum_{j=1}^{4}|a_{j}|^{2}
+\sqrt{2}
\left( 
a_{1}a_{2}+a_{1}a_{3}+a_{2}a_{4}+a_{3}a_{4}
\right)
+2a_{1}a_{4}
\right\} \\ 
=&|\gamma_{3}|^{2}. 
\end{align*}
Here we note $\gamma_{2}\ne 0$ by its definition. 
Thus, (\ref{eq:TJF-U}) gives us the unitary operator $U$ 
with the representation: 
$$
U=
\gamma_{3}
\begin{pmatrix}
\gamma_{1} & -\gamma_{2}^{*} \\ 
\gamma_{2} & \gamma_{1}^{*}
\end{pmatrix}
\in U(1)S\mathbb{H}=U(2).
$$
We show from now on that the above $U$ satisfies 
$D(H_{U})=D(H_{\alpha})$. 

For arbitrarily given $U\in U(2)$, 
insert $\psi(+\Lambda)$, $\psi'(+\Lambda)$, 
$\psi(-\Lambda)$, and $\psi'(-\Lambda)$ 
with the representation (\ref{eq:representation-L-R}) 
into the boundary conditions, 
$$
\begin{cases}
\psi(+\Lambda)=\alpha_{1}\psi(-\Lambda)+\alpha_{2}\psi'(-\Lambda), \\ 
\psi'(+\Lambda)=\alpha_{3}\psi(-\Lambda)+\alpha_{4}\psi'(-\Lambda).  
\end{cases}
$$
Then, since the coefficients $c_{L}$ and $c_{R}$ 
in $D(H_{U})$ are arbitrary and $\gamma_{3}^{-1}=\gamma_{3}^{*}$,  
we can show that the condition $D(H_{U})=D(H_{\alpha})$ 
is equivalent to the system of the following system of equations: 
\begin{align}
& \left(\alpha_{1}+e^{i\pi/4}\alpha_{2}\right)\gamma_{2}-\gamma_{1}^{*}
=e^{i\sqrt{2}\, \Lambda}\gamma_{3}^{*},
\label{eq:inverse-1} \\  
& \gamma_{2}^{*}+\left(\alpha_{1}+e^{i\pi/4}\alpha_{2}\right)\gamma_{1}
=\, -\left(\alpha_{1}+e^{-i\pi/4}\alpha_{2}\right)e^{i\sqrt{2}\, \Lambda}\gamma_{3}^{*}, 
\label{eq:inverse-2} \\  
& e^{i\pi/4}\gamma_{2}^{*}
-\left(\alpha_{3}+e^{i\pi/4}\alpha_{4}\right)\gamma_{1}
=\left(\alpha_{3}+e^{-i\pi/4}\alpha_{4}\right)e^{i\sqrt{2}\, \Lambda}\gamma_{3}^{*}, 
\label{eq:inverse-3} \\  
& \left(\alpha_{3}+e^{i\pi/4}\alpha_{4}\right)\gamma_{2}
+e^{i\pi/4}\gamma_{1}^{*}
=\, -e^{i(\sqrt{2}\, \Lambda-\pi/4)}\gamma_{3}^{*}.
\label{eq:inverse-4}
\end{align} 

Thus, we now show that our $U$ given by (\ref{eq:TJF-U}) 
satisfies the system of equations: 
Using the representation in Proposition \ref{prop:phase}, 
it is easy to check that our $\gamma_{1}$, $\gamma_{2}$, 
and $\gamma_{3}$ given by (\ref{eq:TJF-U}) 
satisfy (\ref{eq:inverse-1}) and 
(\ref{eq:inverse-4}) in the following. 
\begin{align*}
& \left(\alpha_{1}+\eta\alpha_{2}\right)\gamma_{2}
-\gamma_{1}^{*} 
=\, -\Gamma_{0}\left(\eta a_{1}+ia_{2}+a_{3}+\eta a_{4}\right)
=e^{i\sqrt{2}\, \Lambda}\gamma_{3}^{*}, \\ 
& \left(\alpha_{3}+\eta\alpha_{4}\right)\gamma_{2}
+\eta\gamma_{1}^{*} 
=\Gamma_{0}\left( a_{1}+\eta a_{2}+\eta^{*}a_{3}+a_{4}\right)
=\, -\eta^{*}e^{i\sqrt{2}\, \Lambda}\gamma_{3}^{*}. 
\end{align*}
We recall $a_{1}a_{4}-a_{2}a_{3}=1$ by Proposition \ref{prop:phase}. 
This equation leads to 
\begin{align}
& \eta a_{1}a_{4}+i\eta^{*}a_{2}^{2}+\eta^{*}a_{2}a_{3}
=\eta^{*}a_{1}a_{4}+\eta a_{2}^{2}+\eta a_{2}a_{3}+i\sqrt{2}
\label{eq:inverse-5}, \\ 
& a_{1}a_{4}+ia_{2}a_{3}=ia_{1}a_{4}+a_{2}a_{3}+1-i.
\label{eq:inverse-6} 
\end{align}
We can show that our $\gamma_{1}$, $\gamma_{2}$, 
and $\gamma_{3}$ satisfy (\ref{eq:inverse-2}) 
as 
\begin{align*}
\gamma_{2}^{*}+\left(\alpha_{1}+\eta\alpha_{2}\right)\gamma_{1}
=&e^{i\theta}\Gamma_{0}
\Bigl\{ 
\eta a_{1}^{2}+(1+i)a_{1}a_{2}+a_{1}a_{3}+\eta^{*}a_{1}a_{4} \\ 
&\qquad\quad  
+\eta a_{2}^{2}+\eta a_{2}a_{3}+a_{2}a_{4}+i\sqrt{2}
\Bigr\} \\ 
=&\, -\left(\alpha_{1}+\eta^{*}\alpha_{3}\right)
e^{i\sqrt{2}\, \Lambda}\gamma_{3}^{*}
\end{align*} 
with the help of (\ref{eq:inverse-5}), and 
they satisfy   
(\ref{eq:inverse-2}) 
as 
\begin{align*}
\eta\gamma_{2}^{*}-\left(\alpha_{3}+\eta\alpha_{4}\right)\gamma_{1}
=&\, -e^{i\theta}\Gamma_{0}
\Bigl\{ 
\eta a_{1}a_{3}+ia_{1}a_{4}+a_{2}a_{3}+\eta a_{2}a_{4} \\ 
&\qquad\qquad 
+a_{3}^{2}+\sqrt{2}a_{3}a_{4}+a_{4}^{2}+1-i
\Bigr\} \\ 
=&\left(\alpha_{3}+\eta^{*}\alpha_{4}\right)
e^{i\sqrt{2}\, \Lambda}\gamma_{3}^{*}
\end{align*} 
with the help of (\ref{eq:inverse-6}). 

Therefore, the unitary operator $U$ made from 
our $\gamma_{1}$, $\gamma_{2}$, and 
$\gamma_{3}$ satisfies the equation $D(H_{U})=D(H_{\alpha})$, 
and we can complete the proof of our theorem.

\section{Conclusion}
We have completely characterized the boundary conditions 
for all the self-adjoint extensions 
of the minimal Schr\"{o}dinger operator. 
We then found a tunnel-junction formula 
concerning the phase factor. 
In this formula we can find 
the factor $\pm\sqrt{2}\,\Lambda$ 
which depends on the length of the junction. 
We have not yet clarified the physical reason 
why this factor appears. 
Compared with the results in \cite{HK1} and this paper, 
we realize that such a factor 
concerning the parameter $\Lambda$ 
appears for the Schr\"{o}dinger particle, 
but it does not for the Dirac particle \cite{HK1}. 
That is, the Schr\"{o}dinger particle seems 
to feel the distance $2\Lambda$, 
though the Dirac particle does not. 
We conjecture that this situation physically depends on 
the speed of the particle.

\section*{Acknowledgments}
One of the authors (M.H.) acknowledges the financial support from JSPS, 
Grant-in-Aid for Scientific Research (C) 23540204. 
He also expresses special thanks to 
Kae Nemoto and Yutaka Shikano for the useful discussions 
with them.

\end{document}